\theoremstyle{plain}
\newtheorem{thm}{\protect\theoremname}
\theoremstyle{definition}
\theoremstyle{plain}
\newtheorem{lem}{\protect\lemmaname}
\newtheorem{remark}{\protect\remarkname}
\theoremstyle{plain}
\newtheorem{prop}{\protect\propositionname}
\providecommand{\definitionname}{Definition}
\providecommand{\lemmaname}{Lemma}
\providecommand{\theoremname}{Theorem}
\providecommand{\remarkname}{Remark}
\providecommand{\propositionname}{Proposition}
\begin{document}
\title{Local well-posedness of the initial value problem in Einstein-Cartan theory}

\author{Paulo Luz
\footnote{Centro de Astrofísica e Gravitação - CENTRA,  Instituto Superior Técnico, Universidade de Lisboa, Av. Rovisco Pais 1, 1049-001 Lisboa, Portugal.
Email: paulo.luz@tecnico.ulisboa.pt
\newline
$^\dag$ Centre for Mathematical Analysis, Geometry and Dynamical Systems, Instituto Superior T\'ecnico,  Universidade de Lisboa, Av. Rovisco Pais, 1049–001 Lisbon, Portugal. Email: filipecmena@tecnico.ulisboa.pt
\newline
$\flat$ Centre of Mathematics, University of Minho, 4710-057 Braga, Portugal.
}~ and Filipe C. Mena$^{\dag\flat}$}

\date{\today}
\maketitle

\begin{abstract}
We study the initial value problem in Einstein-Cartan theory which includes torsion and, therefore, a non-symmetric connection on the spacetime manifold. Generalizing the path of a classical theorem by Choquet-Bruhat and York for the Einstein equations, we use a $n+1$ splitting of the manifold and compute the evolution and constraint equations for the Einstein-Cartan system. In the process, we derive the Gauss-Codazzi-Ricci equations including torsion. We prove that the constraint equations are preserved during evolution. Imposing a generalized harmonic gauge, it is shown that the evolution equations can be cast as a quasilinear system in a Cauchy regular form with a characteristic determinant having a non-trivial multiplicity of characteristics.
Using the Leray-Ohya theory for non-strictly hyperbolic systems we then establish the local geometric well-posedness of the Cauchy problem,  for sufficiently regular initial data. For vanishing torsion we recover the classical results for the Einstein equations.
\\\\
Keywords: Modified gravity; Einstein-Cartan theory; Torsion; Initial value problem
\end{abstract}
\newpage
\tableofcontents
\newpage
\section{Introduction}

{\bf Motivation}. There has been an increasing interest in generalized theories of gravity in astrophysical and cosmology  \cite{Boehmer} while observations are beginning to reach a level of precision that can be used to test some of them \cite{Ferreira}.
However most generalized gravity theories are not yet built on solid mathematical grounds. In fact several theories lead to instabilities  \cite{Felice} and might not be suitable alternative models. 
Furthermore, the well-posedness of the initial value problem has rarely been analyzed and
it has been shown that 
some theories 
with higher order curvature terms can lead to ill-defined Cauchy problems \cite{LeFlochMa17a}.
Given the significant amount of scientific literature considering generalized theories and their abundant applications (see e.g. \cite{yunes-siemens}), it is crucial to establish rigorous results particularly considering the well-posedness of the initial value problem for the field equations. This  not only puts in more solid mathematical grounds some heuristic and numerical results, but  could also trigger new numerical studies as, in part, they have been hampered by the fact that analytical results are lacking to properly implement the numerics.
\\\\
{\bf Past results}. The local well-posedness of the Cauchy problem for vacuum Einstein equations established by Choquet-Bruhat \cite{Bruhat} constitutes a cornerstone of General Relativity. Subsequent developments in the theory of semi-linear hyperbolic PDEs allowed generalisations by weakening the differentiability requirements in the proof, thus allowing more general initial data sets \cite{Marsden, Hughes}. These pioneering works used the harmonic gauge. The Cauchy problem in modified harmonic gauge was studied in \cite{Choquet-Bruhat_York_1995,Choquet-Bruhat_Ruggeri_1983} who obtained hyperbolic equations for more general matter fields (for a review, see \cite{Choquet-Bruhat_Book_2009}). Some cases of dissipative fluids led to non-strictly hyperbolic formulations for which well-posedness can still be obtained \cite{Choquet-Bruhat_Book_2009}. The underlying theory for those cases was developed by Leray and Ohya which considered function spaces with a regularity between $C^\infty$ and analiticity, in the so-called Gevrey classes \cite{Leray-Ohya}.
Generalizations of the results in GR to modified theories of gravity which obtain hyperbolic formulations and show local well-posedness include some interesting cases of scalar-tensor theories \cite{Avalos-JMP, Cooke, Noakes, Salgado, Tey} and $f(R)$ theory \cite{LeFlochMa17a} with particular matter fields.
\\\\
{\bf Numerical Relativity:} Amongst the Numerical Relativity community there have been important efforts to find strongly hyperbolic formulations of generalized fields equations. Notably it has been found that, in some physically relevant cases, the equations of motion in Lovelock and Horndeski theories are not strongly hyperbolic around weak field solutions \cite{papallo-reall, Reall14,Ripley}. In fact, in those theories there are cases where there is lack of uniqueness of solutions and change of character in the evolution equations which can change the causal structure \cite{Bernard, Delsate}. 
An important aspect resulting from these works is the dependence of hyperbolicity on the gauge choice. For instance, although Horndeski theory is not strongly hyperbolic in the harmonic gauge \cite{papallo-reall}, it is so in a modified harmonic gauge \cite{Kovacs}. Interestingly, strong hyperbolicity has also been shown for some models of scalar-Gauss-Bonnet gravity \cite{Kovacs1}, triggering further developments and applications of those theories (see e.g. \cite{East,Bezares}).
\\\\
{\bf Einstein-Cartan theory:}
A natural extension of general relativity is the Einstein-Cartan theory where the underlying geometry is the Riemann-Cartan geometry, characterized by the metric tensor and the affine connection.
One of the interests in the Einstein-Cartan theory is the ability to include quantum corrections within a geometric theory of gravity by relating the affine connection with the matter intrinsic spin \cite{Hehl, Kibble, Sciama}. A review of the Einstein-Cartan theory can be found in \cite{Blagojevic}. The framework of the theory has led to many important results: In \cite{Luz_Carloni_2019} it was shown that torsion modifies the Buchdahl limit for the maximum compactness of static compact objects; Studies on gravitational collapse concluded that torsion can act as a repulsive force, possibly preventing the formation of singularities \cite{Luz_Mena_2019, Luz-Mena-Ziaie, Stewart, Trautman, Venn}; In cosmology, the presence of torsion and non-metricity, even at very low densities, may modify the dynamics, lead to the emission of characteristic gravitational waves or contribute to dark energy \cite{Barrow, Bolejko, Cubero, Venn2, Luz_Lemos_2023, Martins}.
In spite of the several studies on the Einstein-Cartan theory, a manifestly hyperbolic formulation of the field equations has not been found yet.
\\\\
The aim of this paper is to start the analysis of the initial value problem in the Einstein-Cartan theory. In particular we show a well-posedness result which in general terms can be written as:
\\\\
{\bf Theorem.} {\em Given sufficiently regular initial data on a Cauchy hypersurface, the initial value problem for the Einstein-Cartan system is locally well-posed, on a generalized harmonic gauge, and is future causal.}
\\\\
We do not impose particular matter fields and assume that the matter is given {\em ab initio} and is sufficiently regular. In this sense, we obtain {\em geometric} well-posedness. As we shall see, our result generalizes the well-known theorem of Choquet-Bruhat and York \cite{Choquet-Bruhat_York_1995, Choquet-Bruhat_York_1996} of General Relativity with an important difference: while in GR  the system can be cast in a Leray hyperbolic form \cite{Choquet-Bruhat_York_1996}, in the Einstein-Cartan theory, due to the torsion terms, the system is non-strictly hyperbolic of Leray-Ohya type. This means that for the corresponding Einstein-Cartan system, in general, the initial data will be in Gevrey classes, which seem to be less suitable for numerical simulations \cite{Reula}. See however more recent analytical developments about well-posedness in Gevrey classes \cite{Colombini} and numerical simulations in contexts outside GR, including fluid dynamics \cite{ Avalos-Gevrey, Chemin, Paicu}, diffusion and wave equations \cite{Chernov, Chu}, as well as results about the exponential convergence of numerical methods for Gevrey regularity \cite{Feischl}.
\\\\  
{\bf Plan of the paper}. In Section \ref{defs} we introduce useful definitions and fix our conventions. In Section \ref{embedd} we revise aspects of the geometry of embedded submanifolds with torsion and derive the generalized Ricci-Gauss-Codazzi equations. The results of those sections are purely geometric and do not use the field equations. In Section \ref{decomposition} we use a $n+1$ splitting of the Einstein-Cartan equations and derive the new constraints and evolution equation. Finally, in Section \ref{well-posed} we show that the Cauchy problem for the Einstein-Cartan equations is locally well-posed. We split the later section into three cases and summarize our main result in Theorem \ref{theorem1} of Subsection \ref{final-result}. The article also contains three appendices: In Appendix \ref{Appendix:np1_identities} we present  identities for the $n+1$ formalism for manifolds with torsion, in Appendix \ref{Appendix:conservation_laws} the $n+1$ decomposition of the conservation equations for a general stress-energy tensor and in Appendix \ref{sec:Wave-equation-for-K_(ab)} we derive a wave equation for the symmetric part of the second fundamental form.
\\\\
We use a geometrized units where $G=c=1$. Greek indices run from 0 to $N-1$ and Latin indices run from 1 to $N-1$.
\section{Definitions and conventions}
\label{defs}

Given the  variety of different conventions used in the literature,
we start by introducing the elementary definitions that will be used throughout this article.

Let $\left(\mathcal{M},g,S\right)$ be an $N$--dimensional Lorentzian
manifold, where $g$ represents the metric tensor and $S$ the torsion
tensor, defined below, endowed with a metric compatible affine connection,
$C_{\alpha\beta}^{\gamma}$, such that, the covariant derivative of
a vector field $u\in T \mathcal{M}$ 
is given by
\begin{equation}
\nabla_{\alpha}u^{\gamma}=\partial_{\alpha}u^{\gamma}+C_{\alpha\beta}^{\gamma}u^{\beta}\,\label{eq:Covariant_definition}
\end{equation}
The anti-symmetric part of the connection $C_{\alpha\beta}^{\gamma}$
defines a tensor field called torsion
\begin{equation}
S_{\alpha\beta}{}^{\gamma}:=C_{\left[\alpha\beta\right]}^{\gamma}:=\frac{1}{2}\left(C_{\alpha\beta}^{\gamma}-C_{\beta\alpha}^{\gamma}\right)\,.\label{eq:Torsion_tensor}
\end{equation}
Using this definition, it is possible to split the connection into
an appropriate combination of the torsion tensor plus the usual metric (Levi-Civita)
connection, $\Gamma_{\alpha\beta}^{\gamma}$, such that
\begin{equation}
C_{\alpha\beta}^{\gamma}=\Gamma_{\alpha\beta}^{\gamma}+S_{\alpha\beta}{}^{\gamma}+S^{\gamma}{}_{\alpha\beta}-S_{\beta}{}^{\gamma}{}_{\alpha}\,,\label{eq:Connection}
\end{equation}
with
\begin{equation}
\Gamma_{\alpha\beta}^{\gamma}=\frac{1}{2}g^{\gamma\sigma}\left(\partial_{\alpha}g_{\sigma\beta}+\partial_{\beta}g_{\alpha\sigma}-\partial_{\sigma}g_{\alpha\beta}\right)\,.\label{eq:Christoffel_symbols}
\end{equation}
The difference between the total metric compatible connection and
the Levi-Civita connection defines the contorsion tensor, 
\begin{equation}
{\cal K}_{\alpha\beta}{}^{\gamma}:=C_{\alpha\beta}^{\gamma}-\Gamma_{\alpha\beta}^{\gamma}=S_{\alpha\beta}{}^{\gamma}+S^{\gamma}{}_{\alpha\beta}-S_{\beta}{}^{\gamma}{}_{\alpha}\,.\label{eq:Contorsion}
\end{equation}
From the anti-symmetry of the torsion tensor in the first two indices,
it is straightforward to verify the following identities for the contorsion
tensor
${\cal K}_{\alpha\beta\gamma}  =-{\cal K}_{\alpha\gamma\beta} \label{eq:Contorsion_last2indices}$ and 
${\cal K}_{\left[\alpha\beta\right]}{}^{\gamma}  =S_{\alpha\beta}{}^{\gamma}\,.\label{eq:Contorsion_first2indices}$
Having in mind the general affine connection~(\ref{eq:Connection}),
the Lie derivative between two vector fields $u,v\in T\mathcal{M}$
can be expressed in terms of the torsion-full covariant derivative
as
\begin{equation}
\left(\mathcal{L}_{u}v\right)^{\gamma}=\left[u,v\right]^{\gamma}=u^{\alpha}\nabla_{\alpha}v^{\gamma}-v^{\alpha}\nabla_{\alpha}u^{\gamma}-2S_{\alpha\beta}\,^{\gamma}u^{\alpha}v^{\beta}\,.\label{eq:commutator_covariant_derivatives}
\end{equation}
This equation and the definition of the Riemann tensor associated
with $C_{\alpha\beta}^{\gamma}$, i.e.
\begin{equation}
R_{\alpha\beta\gamma}{}^{\rho}=\partial_{\beta}C_{\alpha\gamma}^{\rho}-\partial_{\alpha}C_{\beta\gamma}^{\rho}+C_{\beta\sigma}^{\rho}C_{\alpha\gamma}^{\sigma}-C_{\alpha\sigma}^{\rho}C_{\beta\gamma}^{\sigma}\,,\label{eq:Riemann_tensor_connnection}
\end{equation}
lead to a modified version of the relation between the Riemann curvature
tensor and the commutator of two covariant derivatives in the case
of non-vanishing torsion:
\begin{equation}
\begin{aligned}\sum_{i=1}^{m}R_{\alpha\beta\nu_{i}}{}^{\rho}\psi^{\mu_{1}\ldots\mu_{k}}{}_{\nu_{1}\ldots\rho\ldots\nu_{m}}-\hspace{0.8cm}\\
-\sum_{i=1}^{k}R_{\alpha\beta\rho}{}^{\mu_{i}}\psi^{\mu_{1}\ldots\rho\ldots\mu_{k}}{}_{\nu_{1}\ldots\nu_{m}} & =\left(\nabla_{\alpha}\nabla_{\beta}-\nabla_{\beta}\nabla_{\alpha}+2S_{\alpha\beta}{}^{\gamma}\nabla_{\gamma}\right)\psi^{\mu_{1}\ldots\mu_{k}}{}_{\nu_{1}\ldots\nu_{m}}\,.
\end{aligned}
\label{eq:Riemann_tensor_definition}
\end{equation}
where $\psi$ is an arbitrary $\left(k,m\right)$-tensor. We recall that in the case
of a Lorentzian manifold with non-vanishing torsion, the Riemann curvature
tensor does not possess the same symmetries as in the torsion-free
case, in particular it verifies $R_{\alpha\beta\gamma\delta}  =-R_{\beta\alpha\gamma\delta}$ and 
$R_{\alpha\beta\gamma\delta}  =-R_{\alpha\beta\delta\gamma}$ as well as the modified Bianchi identities
\begin{align}
R_{\left[\alpha\beta\gamma\right]}{}^{\delta} & =-2\nabla_{\left[\alpha\right.}S_{\left.\beta\gamma\right]}{}^{\delta}+4S_{\left[\alpha\beta\right|}{}^{\rho}S_{\left|\gamma\right]\rho}{}^{\delta}\,,\label{eq:first_Bianchi_identity}\\
\nabla_{\left[\alpha\right.}R_{\left.\beta\gamma\right]\delta}{}^{\rho} & =2S_{\left[\alpha\beta\right|}{}^{\sigma}R_{\left|\gamma\right]\sigma\delta\rho}\,.\label{eq:second_Bianchi_identity}
\end{align}
From Eq.~(\ref{eq:first_Bianchi_identity}) we find that the usual
symmetry of exchanging the first and second pair of indices of the
Riemann tensor is also modified in the presence of torsion, so that
\begin{equation}
2R_{\gamma\delta\alpha\beta}=2R_{\alpha\beta\gamma\delta}+3A_{\alpha\gamma\beta\delta}+3A_{\delta\alpha\beta\gamma}+3A_{\gamma\delta\alpha\beta}+3A_{\beta\delta\gamma\alpha}\,,\label{eq:Riemann_exchange_pair_indices}
\end{equation}
where $A_{\alpha\beta\gamma\delta}\equiv-2\nabla_{\left[\alpha\right.}S_{\left.\beta\gamma\right]\delta}+4S_{\left[\alpha\beta\right|}{}^{\rho}S_{\left|\gamma\right]\rho\delta}$.
Lastly, we recall that using Eq.~(\ref{eq:Riemann_tensor_connnection}) we have
the following expression for the Ricci tensor components in terms
of the connection and its derivatives
\begin{equation}
R_{a\beta}=R_{\alpha\gamma\beta}{}^{\gamma}=\partial_{\gamma}C_{\alpha\beta}^{\gamma}-\partial_{\alpha}C_{\gamma\beta}^{\gamma}+C_{\gamma\sigma}^{\gamma}C_{\alpha\beta}^{\sigma}-C_{\alpha\sigma}^{\gamma}C_{\gamma\beta}^{\sigma}\,.\label{eq:Ricci_tensor_connnection}
\end{equation}
%
\section{Geometry of embedded submanifolds with torsion\label{sec:Geometry_of_embedded_submanifolds}}
\label{embedd}
We will now consider the
existence of an embedding map $\Phi:\mathcal{V}\to \mathcal{M}$, where $\mathcal{V}$ is a differential manifold with  dimension\footnote{It is, nevertheless, possible to easily extend the upcoming results
to manifolds of dimension $N-m$, with $0<m<N$, by applying them
recursively or, equivalently, redefining some quantities.} $N-1$. We are interested
in studying how the geometrical structure on $\mathcal{M}$ induces
a geometrical structure on $\mathcal{V}$ and the relation between
the two.
\subsection{First and second fundamental forms}

\global\long\def\ps#1#2{\prescript{#1}{}{#2}}

Let $\xi^a$ be  a coordinate system on $\mathcal{V}$, where $a,b,...=1,2,...,N-1$ are indices on the hypersurface. 
Given a vector basis $\{\partial/\partial \xi^a\}$ of the tangent space $T\mathcal{V}$, 
the push-forward $d\Phi$ provides a correspondence between the vectors of this basis and sets of  linearly independent vectors $\left\{e_a \right\}$ tangent to
$\mathcal{V}$, given in  coordinates by $e^{\alpha}_a = \partial_{\xi^a} \Phi^{\alpha}$.
Here, we follow the convention of the previous section that greek letters $\alpha,\beta,...=0,1,2,...,N-1$ represent 
spacetime indices. We also consider (up to orientation) a vector $n^{\alpha}$ normal to the hypersurface $\mathcal{V}$ 
satisfying 
\begin{equation}
\begin{aligned}n_{\alpha}e_{a}^{\alpha} & =0\,,\\
n_{\alpha}n^{\alpha} & =\varepsilon\,,\\
n_{\alpha}\nabla_{\beta}n^{\alpha} & =0\,,
\end{aligned}
\end{equation}
where $\varepsilon=-1$ if the hypersurface $\mathcal{V}$
is spacelike and $\varepsilon=+1$ if the hypersurface $\mathcal{V}$
is timelike.
The first and second fundamental forms on $\mathcal{V}$ are given by $h=\Phi^{\star}(g)$ and $K=\Phi^{\star}(\nabla {\bf n})$, where 
$\Phi^{\star}$ denotes the pull-back corresponding to the map $\Phi$. In components, we may write
\begin{eqnarray}
&&h_{ab}:=e_{a}^{\alpha}e_{b}^{\beta}g_{\alpha\beta} \label{eq:first_fundamental_form}\\
&&K_{ab}:=e_{a}^{\alpha}e_{b}^{\beta}\nabla_{\alpha}n_{\beta}\,.\label{eq:extrinsic_curvature_definition}
\end{eqnarray}
Notice that the adopted
definition for $K_{ab}$ differs from the one in \cite{Choquet-Bruhat_York_1995,Choquet-Bruhat_Book_2009,Anderson_Choquet-Bruhat_York_1999}
by a minus sign.

Consider now a congruence of non-null curves
in $\mathcal{M}$ with
tangent vector field $n$, orthogonal at each point to $\mathcal{V}$.
Assuming, without loss of
generality, that the curves of the congruence are affinely parameterized,
we can introduce the projector, in $\mathcal{M}$, onto the orthogonal
space to the congruence as
\begin{equation}
h_{\alpha\beta}=g_{\alpha\beta}-\varepsilon n_{\alpha}n_{\beta}\,.\label{eq:projector_definition}
\end{equation}
The necessary and sufficient conditions for a manifold
to admit integral submanifolds orthogonal to a congruence are provided
by the Frobenius theorem.
Consider then that such conditions are verified and the congruence
associated with $n$ is such that the orthogonal space is a hypersurface $\mathcal{V}$.

The projector $h_{\alpha\beta}$ can be related with the induced metric
on $\mathcal{V}$ by $h_{ab}=e_{a}^{\alpha}e_{b}^{\beta}h_{\alpha\beta}\,\label{eq:projector_induced_metric}$
and  we can readily find the fully orthogonal
components of tensorial quantities on $\mathcal{M}$, that is, given
an arbitrary tensor $\psi^{\alpha...\beta}\in T\mathcal{M}\otimes...\otimes T\mathcal{M}$,
the tensor
\begin{equation}
\left(\psi_{\perp}\right)^{\alpha...\beta}=h_{\mu}{}^{\alpha}...h_{\nu}{}^{\beta}\psi^{\mu...\nu}\,,
\end{equation}
is completely orthogonal to the congruence. So, using the Jacobian
matrix, we can find the induced tensor on $\mathcal{V}$ associated
with $\left(\psi_{\perp}\right)^{\alpha...\beta}$,
\begin{equation}
\psi^{a...b}=e_{\alpha}^{a}...e_{\beta}^{b}\left(\psi_{\perp}\right)^{\alpha...\beta}\,.\label{eq:induced_tensor_definition}
\end{equation}
In addition to the induced metric, the map $\Phi$ also induces a connection
on $\mathcal{V}$. Given a tensor field $\psi_{\alpha...\beta}\in T^{*}\mathcal{M}\otimes...\otimes T^{*}\mathcal{M}$,
completely orthogonal to $n$, we define the intrinsic covariant derivative
of the tensor field $\psi_{a...b}:=e_{a}^{\alpha}...e_{b}^{\beta}\psi_{\alpha...\beta}$
on $\mathcal{V}$ as
\begin{equation}
D_{a}\psi_{b...c}:=e_{a}^{\alpha}e_{b}^{\beta}...e_{c}^{\gamma}\nabla_{\alpha}\psi_{\beta...\gamma}\,.\label{eq:induced_covariant_derivative1}
\end{equation}
Using this definition and Eq.~(\ref{eq:first_fundamental_form})
it is straightforward to show that the compatibility with the metric
$g$ of $\nabla$ induces $D$ to be compatible with the metric $h$, i.e. $D_{a}h_{bc}=e_{a}^{\alpha}e_{b}^{\beta}e_{c}^{\gamma}\nabla_{\alpha}g_{\beta\gamma}=0\,.$
 From Eq.~(\ref{eq:induced_tensor_definition}) we can write
\begin{equation}
D_{a}\psi_{b}=\partial_{a}\psi_{b}-C_{ab}^{c}\psi_{c}\,,\label{eq:induced_covariant_derivative2}
\end{equation}
where the induced connection is given by
\begin{equation}
C_{ab}^{c}=e_{a}^{\alpha}\left(\nabla_{\alpha}e_{b}^{\beta}\right)e_{\beta}^{c}\,.\label{eq:connection_induced_general_definition}
\end{equation}
Now, let $C_{ab}^{c}=\Gamma_{ab}^{c}+\mathcal{K}_{ab}{}^{c}$. Imposing that $D_{a}\psi_{b}$ is a tensor on $\mathcal{V}$ and
requiring that $\Gamma_{\left(ab\right)}^{c}=\Gamma_{ab}^{c}$, where the parenthesis in the indices denotes symmetrization, and
using Eq.~(\ref{eq:connection_induced_general_definition}) yields
\begin{align}
\Gamma_{ab}^{c} & =\frac{1}{2}h^{ci}\left(\partial_{a}h_{ib}+\partial_{b}h_{ai}-\partial_{i}h_{ab}\right)\,,\\
{\cal K}_{ab}{}^{c} & =S_{ab}{}^{c}+S^{c}{}_{ab}-S_{b}{}^{c}{}_{a}\,,
\end{align}
with
\begin{equation}
S_{ab}{}^{c}:=e_{a}^{\alpha}e_{b}^{\beta}e_{\gamma}^{c}S_{\alpha\beta}{}^{\gamma}=C_{\left[ab\right]}^{c}\,,\label{eq:induced_torsion_tensor}
\end{equation}
defining the induced torsion tensor on $\mathcal{V}$.

Computing the covariant derivative along a curve orthogonal to $n$
of a vector field $\psi\in T\mathcal{M}$, such that $\psi^{\alpha}n_{\alpha}=0$,
we find
\begin{equation}
e_{a}^{\beta}\nabla_{\beta}\psi^{\alpha}=e_{b}^{\alpha}D_{a}\psi^{b}-\varepsilon K_{ac}\psi^{c}n^{\alpha}\,,\label{eq:extrinsic_curvature_covariant_derivative_relation}
\end{equation}
showing how the second fundamental form measures the normal
component of the covariant derivative of a vector field, taken along
a direction orthogonal to $n$. Using Eq.~(\ref{eq:extrinsic_curvature_covariant_derivative_relation})
in the case when $\psi\equiv e_{b}$ leads to the Gauss-Weingarten equation
\begin{equation}
e_{a}^{\beta}\nabla_{\beta}e_{b}^{\alpha}=C_{ab}^{c}e_{c}^{\alpha}-\varepsilon K_{ab}n^{\alpha}\,.\label{eq:Gauss-Weingarten_equation}
\end{equation}
Although the previous results have the same form as the expressions
found for manifolds with symmetric, metric compatible, affine connections
(see e.g. \cite{Anderson_Choquet-Bruhat_York_1999}), the
presence of torsion fundamentally changes the properties of various
quantities. In particular, in the presence of torsion we have
\begin{equation}
K_{\left[ab\right]}=-e_{a}^{\alpha}e_{b}^{\beta}S_{\alpha\beta}{}^{\gamma}n_{\gamma}\,,\label{eq:antisymmetric_2nd_Fundamental_form_torsion_relation}
\end{equation}
that is, the second fundamental form is no longer required to be a
symmetric tensor.
%
\subsection{Gauss-Codazzi-Ricci embedding equations}
We are now in a position to study the
relation between the Riemann curvature tensor of $\mathcal{M}$ and
that of the submanifold $\mathcal{V}$, given by\footnote{More generally, the intrinsic Riemann tensor verifies the Ricci identity~(\ref{eq:Riemann_tensor_definition}),
considering the induced connection. }
\begin{equation}
\ps{\left(N-1\right)}{R_{abc}{}^{d}}\psi_{d}=D_{a}D_{b}\psi_{c}-D_{b}D_{a}\psi_{c}+2S_{ab}{}^{d}D_{d}\psi_{c}\,,
\end{equation}
where $\psi\in T^{*}\mathcal{V}$. From Eqs.~(\ref{eq:Riemann_tensor_definition}),
(\ref{eq:projector_definition}) and (\ref{eq:Gauss-Weingarten_equation}),
we find
\begin{align}
e_{a}^{\alpha}e_{b}^{\beta}e_{c}^{\gamma}e_{d}^{\delta}R_{\alpha\beta\gamma\delta} & =\text{\ensuremath{\ps{\left(N-1\right)}{R_{abcd}}}}-\varepsilon\left(K_{ac}K_{bd}-K_{bc}K_{ad}\right)\,,\label{eq:Gauss_equation}\\
\nonumber \\e_{a}^{\alpha}e_{b}^{\beta}e_{c}^{\gamma}n^{\delta}R_{\alpha\beta\gamma\delta} & =D_{a}K_{bc}-D_{b}K_{ac}+2S_{ab}{}^{d}K_{dc}\,.\label{eq:Codazzi_equation}
\end{align}
which are the Gauss and the Codazzi
equations, respectively, for manifolds with a non-vanishing torsion tensor field.

Note that the Gauss equation~(\ref{eq:Gauss_equation}) for Lorentzian
manifolds with torsion retains the same expression as in the torsionless
case, so that the difference between the pull-back of the fully projected
Riemann tensor on $\mathcal{M}$ and the intrinsic Riemann tensor
on $\mathcal{V}$ is completely described by the second fundamental
form. 

On the other hand, the presence of a general torsion field explicitly
modifies the Codazzi equation~(\ref{eq:Codazzi_equation}), contributing
to the tangent components of the Riemann tensor of $\mathcal{M}$.
These results explicitly show that the intrinsic geometry of the submanifold
$\mathcal{V}$ is described
not only by the first and second fundamental forms  on $\mathcal{V}$ but also by the
induced torsion tensor. In that sense, the induced
torsion could be regarded as a third fundamental form of $\mathcal{V}$.

Anticipating what follows and for the sake of clarity, it is useful
to introduce some notation. Given a tensor field $\psi$ in a vector bundle associated with the
tangent and cotangent bundles of $\mathcal{M}$, orthogonal to the
tangent vector field $n$, such that $\psi_{\alpha...\beta}{}^{\gamma\ldots\sigma}=e_{\alpha}^{a}\ldots e_{\beta}^{b}e_{c}^{\gamma}\ldots e_{d}^{\sigma}\psi_{a\ldots b}{}^{c\ldots d}$,
we define
\begin{equation}
\dot{\psi}_{a\ldots b}{}^{c\ldots d}:=n^{\mu}\partial_{\mu}\psi_{a\ldots b}{}^{c\ldots d}\,.
\end{equation}
Moreover, we will use the following shorthand notation
\begin{equation}
\begin{aligned}\psi_{\alpha\ldots0\ldots\beta}{}^{\gamma\ldots0\ldots\sigma} & :=n^{\mu}n_{\nu}\psi_{\alpha\ldots\mu\ldots\beta}{}^{\gamma\ldots\nu\ldots\sigma}\,,\\
\psi_{\alpha\ldots i\ldots\beta}{}^{\gamma\ldots j\ldots\sigma} & :=e_{i}^{\mu}e_{\nu}^{j}\psi_{\alpha\ldots\mu\ldots\beta}{}^{\gamma\ldots\nu\ldots\sigma}\,,
\end{aligned}
\end{equation}
to indicate partial contractions with the tangent vectors $e$ and
$n$. It is also useful to introduce the following tensors, defined from
the non-trivial contractions of the covariant derivative of the vector
fields $n$ and $e_{a}$:
\begin{equation}
\begin{aligned}a^{b} & :=e_{\mu}^{b}n^{\nu}\nabla_{\nu}n^{\mu}\,,\\
\varphi^{a}{}_{b} & :=e_{\alpha}^{a}n^{\beta}\nabla_{\beta}e_{b}^{\alpha}\,.
\end{aligned}
\label{eq:definition_a_varphi}
\end{equation}
\begin{lem}
Assuming that in some open set the manifold $\mathcal{M}$ is foliated
by a family of hypersurfaces $\left\{ \mathcal{V}\right\} $, orthogonal
at each point to a congruence of affinely parameterized curves with
tangent vector field $n$, the  Riemann
tensor on $\mathcal{M}$ verifies
\begin{align}
e_{a}^{\alpha}n^{\beta}e_{c}^{\gamma}n^{\delta}R_{\alpha\beta\gamma\delta} =& D_{a}a_{c}-K_{ad}K^{d}{}_{c}-\dot{K}_{ac}-\varepsilon\,a_{a}a_{c}+K_{ad}\varphi^{d}{}_{c}+K_{dc}\varphi^{d}{}_{a}\nonumber \\
 & +2n^{\mu}S_{a\mu}{}^{d}K_{dc}+2\varepsilon\,n^{\mu}n^{\nu}S_{a\mu\nu}a_{c}\,,\label{eq:Ricci_equation}
\end{align}
and
\begin{equation}
\label{eq:Second_Codazzi_equation}
\begin{aligned}
\frac{1}{2}e_{a}^{\alpha}&e_{b}^{\beta}e_{c}^{\gamma}n^{\delta}\left(R_{\gamma\delta\alpha\beta}-R_{\alpha\beta\gamma\delta}\right)=D_{a}S_{0\left(bc\right)}-D_{b}S_{0\left(ac\right)}+D_{c}S_{0\left[ba\right]}-\frac{1}{2}\dot{S}_{abc}+\dot{S}_{c\left[ab\right]}-\frac{3}{2}D_{\left[a\right.}K_{\left.bc\right]}\\&-2S_{0a}{}^{\mu}S_{\mu\left(bc\right)}+2S_{0b}{}^{\mu}S_{\mu\left(ac\right)}-S_{0ia}S_{bc}{}^{i}-2S_{\mu\left[ab\right]}S_{c0}{}^{\mu}-2S_{0i\left(b\right.}S_{\left.c\right)a}{}^{i}-\frac{3}{2}S_{\left[ab\right|}{}^{i}K_{i\left|c\right]}\\&+S_{i\left[ab\right]}K_{c}{}^{i}-S_{i\left(bc\right)}K_{a}{}^{i}+S_{i\left(ac\right)}K_{b}{}^{i}+\varepsilon S_{a00}K_{\left(bc\right)}-\varepsilon S_{\left[b\right|00}K_{\left|c\right]a}-\varepsilon K_{a\left(b\right.}S_{\left.c\right)00}+U_{abc}\,,
\end{aligned}
\end{equation}
where
\begin{equation}
	\label{eq:U_field}
	\begin{aligned}
		U_{abc}&=S_{\mu\left(bc\right)}\left(e_{i}^{\mu}\varphi^{i}{}_{a}-\varepsilon n^{\mu}a_{a}\right)+\left(S_{\mu\left[ba\right]}+\frac{1}{2}S_{ab\mu}\right)\left(e_{i}^{\mu}\varphi^{i}{}_{c}-\varepsilon n^{\mu}a_{c}\right)\\&+S_{\left[a\right|c\mu}\left(e_{i}^{\mu}\varphi^{i}{}_{\left|b\right]}-\varepsilon n^{\mu}a_{\left|b\right]}\right)-S_{\mu\left(ac\right)}\left(e_{i}^{\mu}\varphi^{i}{}_{b}-\varepsilon n^{\mu}a_{b}\right)\,,
	\end{aligned}
\end{equation}
following from the first Bianchi identity modified by the presence of torsion.
\end{lem}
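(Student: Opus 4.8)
The plan is to prove the two identities separately. In each case I start from a curvature relation already recorded in this section and then trade ambient covariant derivatives for intrinsic ones on $\mathcal{V}$, using the Gauss--Weingarten equation~(\ref{eq:Gauss-Weingarten_equation}), the relation~(\ref{eq:extrinsic_curvature_covariant_derivative_relation}), and the auxiliary tensors~(\ref{eq:definition_a_varphi}).

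For the Ricci equation~(\ref{eq:Ricci_equation}), I would apply the generalized Ricci identity~(\ref{eq:Riemann_tensor_definition}) to the covector field $n_{\gamma}$, which yields $R_{\alpha\beta\gamma}{}^{\rho}n_{\rho}=\left(\nabla_{\alpha}\nabla_{\beta}-\nabla_{\beta}\nabla_{\alpha}+2S_{\alpha\beta}{}^{\mu}\nabla_{\mu}\right)n_{\gamma}$, and then contract with $e_{a}^{\alpha}n^{\beta}e_{c}^{\gamma}$ so that the left-hand side becomes precisely $e_{a}^{\alpha}n^{\beta}e_{c}^{\gamma}n^{\delta}R_{\alpha\beta\gamma\delta}$. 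The substance of the argument is expanding the right-hand side: I would write each second covariant derivative as a derivative of $\nabla n$, insert $\nabla_{\alpha}n_{\gamma}$ through $K$ using~(\ref{eq:extrinsic_curvature_definition}), and repeatedly invoke~(\ref{eq:Gauss-Weingarten_equation}) together with the normalization $n_{\alpha}\nabla_{\beta}n^{\alpha}=0$ to eliminate derivatives of the frame fields $e_{a}$ and $n$. The terms $D_{a}a_{c}$, $-\dot{K}_{ac}$, $-\varepsilon\,a_{a}a_{c}$, $-K_{ad}K^{d}{}_{c}$ and the two $\varphi$ cross terms come from the metric part of the commutator, exactly as in the torsion-free case, while the explicit torsion term $2S_{\alpha\beta}{}^{\mu}\nabla_{\mu}n_{\gamma}$ produces the remaining contributions $2n^{\mu}S_{a\mu}{}^{d}K_{dc}+2\varepsilon\,n^{\mu}n^{\nu}S_{a\mu\nu}a_{c}$, again after using~(\ref{eq:extrinsic_curvature_definition}) and the normalization condition.

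For the second Codazzi equation~(\ref{eq:Second_Codazzi_equation}), the natural starting point is the modified exchange symmetry~(\ref{eq:Riemann_exchange_pair_indices}), itself a consequence of the modified first Bianchi identity~(\ref{eq:first_Bianchi_identity}), which already expresses the difference $R_{\gamma\delta\alpha\beta}-R_{\alpha\beta\gamma\delta}$ as a sum of the tensors $A_{\alpha\beta\gamma\delta}=-2\nabla_{[\alpha}S_{\beta\gamma]\delta}+4S_{[\alpha\beta|}{}^{\rho}S_{|\gamma]\rho\delta}$. I would contract this sum with $e_{a}^{\alpha}e_{b}^{\beta}e_{c}^{\gamma}n^{\delta}$ and handle the two types of contributions in turn. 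The quadratic torsion terms project directly into the displayed $S_{0\cdots}S_{\cdots}$ products once each free index is resolved into a tangential ($e$) or normal ($n$) contraction according to the shorthand of this section. For the derivative terms $\nabla_{[\alpha}S_{\beta\gamma]\delta}$, a tangential leading derivative is converted into an intrinsic one through~(\ref{eq:induced_covariant_derivative1}), giving rise to $D_{a}S_{0(bc)}$, $D_{b}S_{0(ac)}$, $D_{c}S_{0[ba]}$ and, via the relation~(\ref{eq:antisymmetric_2nd_Fundamental_form_torsion_relation}) between $K_{[ab]}$ and the normal projection of the torsion, to the $D_{[a}K_{bc]}$ term; a normal leading derivative, which arises whenever the $n$-contraction lands on one of the first three slots of an $A$, produces the $\dot{S}$ terms through the definition $\dot{\psi}=n^{\mu}\partial_{\mu}\psi$.

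The main obstacle is the bookkeeping in this second projection. Because the frame fields are not covariantly constant, each conversion of an ambient derivative into an intrinsic one generates correction terms proportional to $n^{\nu}\nabla_{\nu}e_{a}^{\mu}=e_{i}^{\mu}\varphi^{i}{}_{a}-\varepsilon\,n^{\mu}a_{a}$ and to $\nabla n$, that is, to $K$, $a^{b}$ and $\varphi^{a}{}_{b}$. I would gather precisely these frame-derivative corrections into the auxiliary field $U_{abc}$ of~(\ref{eq:U_field}), after which the residual $S$- and $K$-terms assemble into the compact form displayed in~(\ref{eq:Second_Codazzi_equation}). Correctly distributing the four contractions over the permuted index slots of the four $A$ tensors while respecting the inherited antisymmetrizations and symmetrizations is the delicate part; it is the identity~(\ref{eq:antisymmetric_2nd_Fundamental_form_torsion_relation}) that lets the normal projections of torsion be traded for $K_{[ab]}$ and thereby closes the computation.
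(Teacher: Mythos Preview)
Your proposal is correct and follows the same approach the paper indicates: the paper does not spell out a detailed proof of this lemma, but it records that~(\ref{eq:Second_Codazzi_equation}) follows from the modified first Bianchi identity, which is exactly your route via the exchange relation~(\ref{eq:Riemann_exchange_pair_indices}), and your derivation of~(\ref{eq:Ricci_equation}) from the Ricci identity applied to $n_{\gamma}$ is the standard computation underlying the Gauss--Codazzi--Ricci package used throughout Section~\ref{sec:Geometry_of_embedded_submanifolds}. The bookkeeping you describe---trading ambient derivatives for intrinsic ones via~(\ref{eq:Gauss-Weingarten_equation}) and~(\ref{eq:induced_covariant_derivative1}), and collecting the frame-derivative corrections into $U_{abc}$---is precisely the mechanism the paper has in mind.
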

Note that in the absence of torsion the left-hand-side of Eq.~(\ref{eq:Second_Codazzi_equation}) is identically zero, recovering the classical Codazzi equation. We remark that the tensor field \eqref{eq:U_field}  contains only terms directly related with the choice of the frame.
\begin{remark}
We note that a similar generalized Gauss equation (\ref{eq:Gauss_equation}) was derived in \cite{Boersma_Dray_1995}. 
As far as we know, the generalized Codazzi  (\ref{eq:Codazzi_equation}) and Ricci equations \eqref{eq:Ricci_equation} as well as Eq.~(\ref{eq:Second_Codazzi_equation}) were
never presented in the literature. Nonetheless, we remark that a particular version of these equations, derived in a different context, was given in \cite{Lau_1998} for a specific type of torsion.
\end{remark}
\begin{remark}
The above equations are purely geometric and do not assume the field equations of Einstein-Cartan theory. In fact, up to this point, our results apply to any metric compatible affine theory with or without torsion.
\end{remark}
\section{The $n+1$ decomposition of the field equations}
\label{decomposition}

The Gauss-Codazzi-Ricci equations
allow us to separate the Einstein-Cartan equations in a set of constraint
and evolution equations for the first and second fundamental forms as well as the induced torsion
on non-null hypersurfaces that locally foliate the spacetime manifold. 
In order to do that we use
of the $n+1$ spacetime formalism.
%
\subsection{The $n+1$ formalism}

Consider the spacetime ${\mathcal M}= {\mathcal V}\times \mathbb{R}$, such that ${\mathcal V}_\lambda={\mathcal V} \times \{\lambda\}$ are  either spacelike or timelike hypersurfaces. Later on, we studying the Cauchy problem, we will assume that ${\mathcal V}_\lambda$ are spacelike.
As in the last section, consider a frame $\{e_a\}$ tangent to ${\mathcal V}_\lambda$ and let  $t$ be a tensor field orthogonal to $\mathcal{V}_\lambda$, hence to $\{e_a\}$,
such that
\begin{equation}
\mathcal{L}_{t}e_{a}=0\,,\label{eq:propagate_tangent_vectors}
\end{equation}
where $\mathcal{L}_{t}$ represents the Lie derivative along $t$. Then, $\left(t, e_a \right) $ forms a local Cauchy adapted frame on $\mathcal{M}$.
In general, 
$t$ does not have the same direction
as $n$. However, at each point, $t$ can be decomposed as the sum
of a component parallel to $n$ and components tangent to $\mathcal{V}_{\lambda}$ as
\begin{equation}
t^{\alpha}=\mathsf{N}n^{\alpha}+\beta^{i}e_{i}^{\alpha}\,.
\end{equation}
The function $\mathsf{N}$ is usually called the \emph{lapse function} and
the vector field $\beta$ is called the \emph{shift vector}.

\subsection{Constraint and evolution equations}

For a $N$-dimensional manifold, with
$N>2$, the field equations of Einstein-Cartan theory can be derived from the Einstein-Hilbert action as
\begin{align}
G_{\alpha\beta}:=R_{\alpha\beta}-\frac{1}{2}g_{\alpha\beta}R & =8\pi\mathcal{T}_{\alpha\beta}-g_{\alpha\beta}\Lambda\,,\label{eq:efe1}\\
S^{\alpha\beta\gamma}+2g^{\gamma[\alpha}S^{\beta]}{}_{\mu}{}^{\mu} & =-8\pi\Delta^{\alpha\beta\gamma}\,,
\label{eq:efe2}
\end{align}
where $\mathcal{T}_{\alpha\beta}$
represents the components of the canonical stress-energy tensor, $\Lambda$
the cosmological constant, $\Delta_{\alpha\beta\mu}$ the components
of the intrinsic hypermomentum. Moreover, from the second
Bianchi identity~(\ref{eq:second_Bianchi_identity}) and the field
equations~(\ref{eq:efe1}) we find the conservation laws for the
canonical stress-energy tensor:
\begin{equation}
\nabla_{\beta}\mathcal{T}_{\alpha}{}^{\beta}=2S_{\alpha\mu\nu}\mathcal{T}^{\nu\mu}-\frac{1}{4\pi}S_{\alpha\mu}{}^{\mu}\Lambda+\frac{1}{8\pi}\left(S_{\alpha\mu}{}^{\mu}R-S^{\mu\nu\sigma}R_{\alpha\sigma\mu\nu}\right)\,.\label{eq:conservation_law}
\end{equation}
Using the results in Sec.~\ref{sec:Geometry_of_embedded_submanifolds},
we can find a set of constraint and evolution equations for the
dynamical variables. Let
\begin{equation}
\mathcal{L}_{0}:=\mathcal{L}_{t}-\mathcal{L}_{\mathsf{\beta}}\,,
\end{equation}
and consider the shorthand notation
$S_{a}=S_{a00}$, $K:=K_{i}{}^{i}$ to represent the trace of the
second fundamental form, and $\ps{\left(N-1\right)}{R_{ab}}:=\ps{\left(N-1\right)}{R_{aib}{}^{i}}$
to indicate the intrinsic Ricci tensor of each hypersurface $\mathcal{V}_\lambda$.
Then, we find the constraint equations
\begin{align}
2G_{00} & := K^{2}-K^{ab}K_{ba}-\varepsilon\,\,\ps{\left(N-1\right)}R=16\pi\mathcal{T}_{00}-2\varepsilon\Lambda\,,\label{eq:constraint_G00}\\
G_{a0} & := D_{i}K_{a}{}^{i}-D_{a}K-2S_{aij}K^{ji}=8\pi\mathcal{T}_{a0}\,,\label{eq:constraint_Gi0}\\
S_{abc} & =-8\pi\Delta_{abc}-\frac{16\pi}{N-2}h_{c\left[a\right|}\left(\Delta_{\left|b\right]i}{}^{i}+\varepsilon\Delta_{\left|b\right]00}\right)\,,
\label{eq:constraint_S_abc}\\
S_{ab0} & =-8\pi\Delta_{ab0}\,,\\
S_{0ab} & =\frac{8\pi}{N-2}h_{ab}\Delta_{0i}{}^{i}-8\pi\Delta_{0ab}\,,\\
S_{a} & =\frac{8\pi}{N-2}\left[\varepsilon\Delta_{ai}{}^{i}-\left(N-3\right)\Delta_{a00}\right]\,,\label{eq:constraint_S_a00}\\
K_{\left[ab\right]} & =-S_{ab0}\,,
\label{eq:constraint_K_S_relation}
\end{align}
and the evolution equations\footnote{We remark
that the notation $\psi_{\ldots0\ldots}$ differs from that in, for
instance, Ref.~\cite{Choquet-Bruhat_Book_2009} by a $\mathsf{N}^{-1}$
factor.}
\begin{eqnarray}
\label{eq:propagation_Gij}
R_{\alpha\beta}e_{a}^{\alpha}e_{b}^{\beta}&:=&  \ps{\left(N-1\right)}{R_{ab}}-\varepsilon K_{ab}K+\varepsilon K_{ai}K^{i}{}_{b}+\varepsilon K_{ai}K_{b}{}^{i}+\varepsilon D_{a}a_{b}-a_{a}a_{b}\nonumber\\
 && +2S_{a}a_{b}+2\varepsilon S_{0bi}K_{a}{}^{i}-\frac{\varepsilon}{\mathsf{N}}\mathcal{L}_{0}K_{ab}=8\pi\left(\mathcal{T}_{ab}-\frac{h_{ab}}{2}\mathcal{T}\right)+\frac{2 h_{ab}}{N-2} \Lambda\,,\\
\label{eq:propagation_G0i_antisymmetric}
G_{\left[0b\right]}&:=&\frac{1}{\mathsf{N}}\mathcal{L}_{0}S_{bi}{}^{i}+\frac{1}{\mathsf{N}}D_{i}\left(\mathsf{N}S_{0b}{}^{i}\right)-\frac{1}{\mathsf{N}}D_{b}\left(\mathsf{N}S_{0i}{}^{i}\right)+\frac{1}{\mathsf{N}}S^{i}{}_{b0}D_{i}\mathsf{N}\nonumber\\
 && +S_{bij}K^{ji}+2\varepsilon S_{bi0}S^{i}+\varepsilon S^{i}K_{bi}-\varepsilon S_{b}K=8\pi\mathcal{T}_{\left[0b\right]}\,,
\end{eqnarray}
where $\mathcal{T}:=\mathcal{T}_{\mu}{}^{\mu}=\mathcal{T}_{i}{}^{i}+\varepsilon\mathcal{T}_{00}$
represents the trace of the canonical stress-energy tensor. Furthermore, from the definition of the second fundamental form,
we have the evolution equation for the induced metric
\begin{equation}
	\mathcal{L}_{0}h_{ab}=2\mathsf{N}K_{\left(ab\right)}+4\mathsf{N}S_{0\left(ab\right)}.\label{eq:evolution_h}
\end{equation}
We note that 
Equation~(\ref{eq:constraint_K_S_relation}) is a consequence of
the Frobenius Theorem.
Additionally, to formalize the Cauchy problem for the geometric quantities and the matter fields, the conservation equations must
be included explicitly and are given in Appendix \ref{Appendix:conservation_laws}.
A few comments about the above equations are in order:
	\begin{itemize}
	\item Equations~(\ref{eq:constraint_G00})--(\ref{eq:evolution_h}) do
	not depend on the $\mathcal{L}_{0}$-derivatives of the variables
	$\mathtt{\mathsf{N}}$, $\beta$, $S_{a}$, $S_{0ab}$ and the traceless part of $S_{abc}$. Moreover,
	we see that the evolution equations~(\ref{eq:propagation_Gij}) and
	(\ref{eq:propagation_G0i_antisymmetric}), only depend on the $\mathcal{L}_{0}$
	derivative of the component $S_{ai}{}^{i}$ of the induced torsion.
		\begin{itemize}
			\item The lapse function $\mathtt{\mathsf{N}}$ and the shift
			vector $\beta$ are regarded
			as gauge variables.  
			Nonetheless, the lapse 
			is not completely free as from (\ref{eq:propagate_tangent_vectors})
			and the first Bianchi identity~(\ref{eq:first_Bianchi_identity})
			we find the  identities
			\begin{align}
				a_{b} & =2S_{b}-\frac{\varepsilon}{\mathsf{N}}D_{b}\mathsf{N}\,,
				\label{eq:constraint_a_N}\\
				\partial_{\left[a\right.}\partial_{\left.b\right]}\ln\left|\mathsf{N}\right| & =0\,.
				\label{eq:constraint_N_closed_form}
			\end{align}
			\item The previous equations show
			that the relation between the evolution of $S_{a}$, $S_{0ab}$, the traceless part of $S_{abc}$  and the evolution of the stress-energy tensor depends on the physical model. This leads to an interesting
			insight into the structure of the field equations and
			the relation between torsion and the matter source.

			\item Provided
			$\mathcal{T}_{[a0]}$, 
			Eq.~(\ref{eq:propagation_G0i_antisymmetric})
			are evolution equations for $S_{ai}{}^{i}$. From the conservation
			laws~(\ref{eq:propagation_L0_T00}) and (\ref{eq:propagation_L0_Ti0}) 
			we see that $\mathcal{L}_{0}\mathcal{T}_{00}$
			and $\mathcal{L}_{0}\mathcal{T}_{a0}$ are related to
			$\mathcal{L}_{0}S_{abc}$ and $\mathcal{L}_{0}K_{ab}$. However, these
			equations do not fully determine $\mathcal{L}_{0}S_{abc}$ and information from a particular physical model must be provided through the hypermomentum tensor. 
		\end{itemize}
	\end{itemize}

\begin{itemize}

\item The system (\ref{eq:constraint_G00})--(\ref{eq:evolution_h}) of second order partial differential equations for $h_{ab}$, $K_{ab}$,
and the torsion components is not a hyperbolic, in general. Moreover, it is not clear that the solutions depend solely on the
initial data in its causal past. In this regard, we will 
look for a hyperbolic formulation, locally equivalent to the Einstein-Cartan equations.

\end{itemize}

In conclusion, from a purely mathematical point of view, the components $S_{ab0}$ and $S_{ai}{}^{i}$
of the torsion tensor have a privileged standing, such that the field
equations only characterize the evolution of those components. Therefore, in general, to study the field equations
of the theory of Einstein-Cartan in the $n+1$ formalism, we can choose
the geometric variables
$$\left\{ h_{ab},K_{\left(ab\right)},S_{ai}{}^{i},S_{ab0}\right\} ,$$
where we explicitly decompose the extrinsic curvature in its symmetric
and anti-symmetric parts to clearly show the appearance of the $S_{ab0}$
component of the torsion field, and the matter related variables 
$$\left\{ \mathcal{T}_{00},\mathcal{T}_{ab},\mathcal{T}_{a0},\mathcal{T}_{0a}\right\}.$$
 The remaining components of the torsion tensor, through the hypermomentum tensor, can either be explicitly related to the unknowns by considering a particular matter model, or postulated functions	of the spacetime. Since the former case requires the adoption of a particular model, we will assume the latter case and keep the discussion as general as possible.

\subsection{Propagation of the constraints\label{sec:Propagation_contraints}}

In this section we show that the constraint equations~(\ref{eq:constraint_G00})
and (\ref{eq:constraint_Gi0}) are properly propagated, such that
if those constraints are verified at some initial spacelike slice, then they will be verified in subsequent slices, as in General Relativity.
\begin{prop}
\label{theorem:propagation_constraints_general}Let the evolution equations~(\ref{eq:propagation_Gij}) and (\ref{eq:propagation_G0i_antisymmetric})
as well as the conservation laws~(\ref{eq:propagation_L0_T00}) and (\ref{eq:propagation_L0_Ti0})
hold. If $\varepsilon=-1$ and the constraint equations~(\ref{eq:constraint_G00}) and
(\ref{eq:constraint_Gi0}) are verified at an initial hypersurface
$\mathcal{V}_{\lambda}$, then they are verified at $\left\{ \mathcal{V}_{\lambda+\delta\lambda}\right\} $,
for sufficiently small $\delta\lambda$.
\end{prop}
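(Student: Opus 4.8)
The plan is to mimic the Choquet--Bruhat--York propagation argument: show that the constraint violations obey a first-order, linear, homogeneous evolution system sourced only by the violations themselves, so that zero data on $\mathcal{V}_\lambda$ forces them to vanish on a neighbouring slice. I introduce the field-equation defect $\mathcal{G}_{\alpha\beta}:=G_{\alpha\beta}-8\pi\mathcal{T}_{\alpha\beta}+\Lambda g_{\alpha\beta}$ and define the Hamiltonian and momentum constraint violations $\mathcal{C}:=n^{\alpha}n^{\beta}\mathcal{G}_{\alpha\beta}$ and $\mathcal{C}_{a}:=e_{a}^{\alpha}n^{\beta}\mathcal{G}_{\alpha\beta}$. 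Using $n_{\alpha}e_{a}^{\alpha}=0$ and $n_{\alpha}n^{\alpha}=\varepsilon$, these are exactly the quantities that vanish when (\ref{eq:constraint_G00}) and (\ref{eq:constraint_Gi0}) hold, and by hypothesis they vanish on the initial slice $\mathcal{V}_{\lambda}$.

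The first step is to produce a torsionful analogue of $\nabla_{\mu}(G^{\mu\nu}-8\pi T^{\mu\nu})=0$. Contracting the modified second Bianchi identity (\ref{eq:second_Bianchi_identity}) twice yields a purely geometric identity $\nabla_{\beta}G_{\alpha}{}^{\beta}=\mathcal{B}_{\alpha}$, where $\mathcal{B}_{\alpha}$ is an explicit contraction of the torsion with the Riemann, Ricci and scalar curvatures (the $\Lambda$ piece drops out since the cosmological constant is covariantly constant). The crucial observation is that the conservation law (\ref{eq:conservation_law}) was itself obtained by substituting the field equations (\ref{eq:efe1}) into this same identity; hence the \emph{full}-Riemann contributions, such as the $S^{\mu\nu\sigma}R_{\alpha\sigma\mu\nu}$ terms, appear identically in $\mathcal{B}_{\alpha}$ and in $8\pi\nabla_{\beta}\mathcal{T}_{\alpha}{}^{\beta}$. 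Subtracting, those pieces cancel and the remaining Ricci/scalar terms reorganize into the defect, giving
\[ \nabla_{\beta}\mathcal{G}_{\alpha}{}^{\beta}=\mathcal{A}_{\alpha}{}^{\beta\gamma}\,\mathcal{G}_{\beta\gamma}, \]
with $\mathcal{A}$ algebraic and built only from the (prescribed, regular) torsion, so the right-hand side is linear and homogeneous in $\mathcal{G}_{\beta\gamma}$.

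The second step is to decompose this identity in the Cauchy-adapted frame $(t,e_{a})$ using the $n+1$ machinery of Section~\ref{decomposition} and Appendix~\ref{Appendix:np1_identities}. The tangential components $\mathcal{G}_{ab}$ vanish by the evolution equations (\ref{eq:propagation_Gij}), the antisymmetric mixed part $\mathcal{G}_{[0b]}$ vanishes by (\ref{eq:propagation_G0i_antisymmetric}), and the algebraic torsion relations (\ref{eq:constraint_S_abc})--(\ref{eq:constraint_K_S_relation}), being definitions in terms of the given hypermomentum, hold throughout. Projecting $\nabla_{\beta}\mathcal{G}_{\alpha}{}^{\beta}=\mathcal{A}_{\alpha}{}^{\beta\gamma}\mathcal{G}_{\beta\gamma}$ along $n$ and along $e_{a}$, and eliminating the matter divergences via the conservation laws (\ref{eq:propagation_L0_T00}) and (\ref{eq:propagation_L0_Ti0}), I expect to be left with a system of the schematic form
\[ \frac{1}{\mathsf{N}}\mathcal{L}_{0}\mathcal{C}=\mathcal{P}\big(D\mathcal{C}_{a},\mathcal{C},\mathcal{C}_{a}\big),\qquad \frac{1}{\mathsf{N}}\mathcal{L}_{0}\mathcal{C}_{a}=\mathcal{Q}\big(D\mathcal{C},D\mathcal{C}_{b},\mathcal{C},\mathcal{C}_{b}\big), \]
with $\mathcal{P},\mathcal{Q}$ linear in the violations and their intrinsic derivatives $D$, and coefficients depending only on the regular geometric and torsion fields. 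For $\varepsilon=-1$ the normal $n$ is timelike and $\mathcal{L}_{0}$ is a genuine transverse derivative, so this is a first-order linear homogeneous system with a well-posed Cauchy problem on the spacelike slices $\mathcal{V}_{\lambda}$ (one checks it admits a symmetrizer, as in the General Relativity argument). Since $(\mathcal{C},\mathcal{C}_{a})$ vanish on $\mathcal{V}_{\lambda}$, uniqueness forces $(\mathcal{C},\mathcal{C}_{a})\equiv 0$ on $\mathcal{V}_{\lambda+\delta\lambda}$ for $\delta\lambda$ small, which is the claim.

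I expect the main obstacle to be the second step: confirming that once the evolution equations remove $\mathcal{G}_{ab}$ and $\mathcal{G}_{[0b]}$, every torsion-induced term genuinely reorganizes into a combination of $\mathcal{C}$, $\mathcal{C}_{a}$ and their spatial derivatives, with no leftover inhomogeneous source. The delicate point is precisely that $\mathcal{A}_{\alpha}{}^{\beta\gamma}$ couples the defect to the torsion, so one must verify that the full-Riemann contractions cancel exactly against the identical terms generated by (\ref{eq:conservation_law}); establishing the symmetric-hyperbolic (hence well-posed) character of the propagation system, together with the role of the sign $\varepsilon=-1$, is the other key point.
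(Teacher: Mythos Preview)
Your strategy is the same as the paper's: form the defect $\mathcal{G}_{\alpha\beta}=G_{\alpha\beta}-8\pi\mathcal{T}_{\alpha\beta}+\Lambda g_{\alpha\beta}$, combine the contracted second Bianchi identity with the assumed conservation laws so that the full-Riemann contractions cancel, project in the $n+1$ frame, use the evolution equations to eliminate the remaining defect components, and show that $(\mathcal{C},\mathcal{C}_a)$ satisfies a linear homogeneous first-order symmetric-hyperbolic system when $\varepsilon=-1$. The paper carries out exactly this computation, exhibiting the explicit symmetrizer $\mathsf{S}=\mathrm{diag}(1,-\varepsilon h^{ab})$.

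There is, however, one genuine inaccuracy in your second step. You assert that ``the tangential components $\mathcal{G}_{ab}$ vanish by the evolution equations~(\ref{eq:propagation_Gij})''. They do not. Equation~(\ref{eq:propagation_Gij}) is written in Ricci form, i.e.\ it sets
\[
F_{ab}:=e_a^\alpha e_b^\beta R_{\alpha\beta}-8\pi\Bigl(\mathcal{T}_{ab}-\tfrac{h_{ab}}{2}\mathcal{T}\Bigr)-\tfrac{2h_{ab}}{N-2}\Lambda=0,
\]
whereas $\mathcal{G}_{ab}=e_a^\alpha e_b^\beta G_{\alpha\beta}-8\pi\mathcal{T}_{ab}+\Lambda h_{ab}$ differs from $F_{ab}$ by the trace subtraction. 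The paper makes this explicit via the relation $\Sigma_{ab}=F_{ab}-h_{ab}\bigl(F+\varepsilon\Sigma_{00}\bigr)$, so that under $F_{ab}=0$ one has $\mathcal{G}_{ab}=-\varepsilon h_{ab}\,\mathcal{C}$ rather than zero. This residual piece feeds back into the propagation system: for instance the term $D_i\Sigma_a{}^{i}$ becomes $-\varepsilon D_a\Sigma_{00}$, which is precisely the spatial-derivative coupling that makes the final system symmetrizable. If you had set $\mathcal{G}_{ab}=0$ outright you would obtain the wrong principal part. Once you track this trace correction (and similarly note that $\mathcal{G}_{[0b]}=0$ gives $\Sigma_{0b}=\Sigma_{b0}$), your schematic system collapses to the paper's
\[
\mathcal{L}_0\Sigma_{00}+\varepsilon\mathsf{N}D^i\Sigma_{i0}=\text{l.o.t.},\qquad
\mathcal{L}_0\Sigma_{a0}-\mathsf{N}D_a\Sigma_{00}=\text{l.o.t.},
\]
which is indeed symmetric hyperbolic for $\varepsilon=-1$, and the argument closes as you anticipated.
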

\begin{proof}
Consider the quantities
\begin{align}
\Sigma_{00} & :=G_{00}-8\pi\mathcal{T}_{00}+\varepsilon\Lambda\,,\label{eq_Econstraints:definition_Sigma_00}\\
\Sigma_{ab} & :=G_{ab}-8\pi\mathcal{T}_{ab}+h_{ab}\Lambda\,,\label{eq_Econstraints:definition_Sigma_ij}\\
\Sigma_{a0} & :=G_{a0}-8\pi\mathcal{T}_{a0}\,,\label{eq_Econstraints:definition_Sigma_i0}\\
\Sigma_{0a} & :=G_{0a}-8\pi\mathcal{T}_{0a}\,,\label{eq_Econstraints:definition_Sigma_0i}\\
F_{ab} & :=e_{a}^{\alpha}e_{b}^{\alpha}R_{\alpha\beta}-8\pi\left(\mathcal{T}_{ab}-\frac{h_{ab}}{2}\mathcal{T}\right)-\frac{2h_{ab}}{N-2}\Lambda.\label{eq_Econstraints:definition_F}
\end{align}
From the field equations~(\ref{eq:efe1}), we can write the constraint
equations~(\ref{eq:constraint_G00}) and (\ref{eq:constraint_Gi0})
as
\begin{align}
\Sigma_{00} & =\frac{1}{2}\left(K^{2}-K^{ab}K_{ba}-\varepsilon\,\,\ps{\left(N-1\right)}R\right)-8\pi\mathcal{T}_{00}+\varepsilon\Lambda\,,\label{eq_Econstraints:constraint_Sigma_00}\\
\Sigma_{a0} & =D_{i}K_{a}{}^{i}-D_{a}K-2S_{aij}K^{ji}-8\pi\mathcal{T}_{a0}\,.\label{eq_Econstraints:constraint_Sigma_a0}
\end{align}
Therefore, the constraint equations are verified if $\Sigma_{00}$
and $\Sigma_{a0}$ are identically zero. To evaluate the propagation
of these quantities it is useful to rewrite Eqs.~(\ref{eq:propagation_L0_T00})
and (\ref{eq:propagation_L0_Ti0}) in the more compact form
\begin{equation}
\begin{aligned}\frac{\varepsilon}{\mathsf{N}}\mathcal{L}_{0}\mathcal{T}_{00}= & -D_{i}\mathcal{T}_{0}{}^{i}+\mathcal{T}_{ij}K^{ji}-\varepsilon\mathcal{T}_{00}K+2\varepsilon\mathcal{T}_{\left(0i\right)}a^{i}-\frac{1}{8\pi}S^{ijk}R_{0kij}\\
 & -\frac{\varepsilon}{4\pi}S^{0ji}R_{0i0j}+2S_{0ij}\mathcal{T}^{ji}-2\varepsilon S_{i}\mathcal{T}^{0i}-S_{0i}{}^{i}\left(\frac{\Lambda}{4\pi}-\frac{R}{8\pi}\right)\,,
\end{aligned}
\label{eq_Econstraints:propagation_L0_T00}
\end{equation}
and
\begin{equation}
\begin{aligned}\frac{\varepsilon}{\mathsf{N}}\mathcal{L}_{0}\mathcal{T}_{a0}= & -D_{i}\mathcal{T}_{a}{}^{i}-\varepsilon\left(\mathcal{T}_{0i}K^{i}{}_{a}+\mathcal{T}_{a0}K\right)+\varepsilon\mathcal{T}_{ai}a^{i}+\left(S_{ai}{}^{i}+\varepsilon S_{a}\right)\left(\frac{R}{8\pi}-\frac{\Lambda}{4\pi}\right)\\
 & -\frac{1}{8\pi}S^{ijk}R_{akij}-\frac{\varepsilon}{8\pi}S^{ij0}R_{a0ij}-\frac{\varepsilon}{4\pi}S^{0ji}R_{ai0j}-\frac{1}{4\pi}S^{i}R_{a0i0}\\
 & +\frac{\varepsilon}{\mathsf{N}}\mathcal{T}_{00}D_{a}\mathsf{N}+\varepsilon\mathcal{T}_{i0}K_{a}{}^{i}+2S_{aij}\mathcal{T}^{ji}+2\varepsilon S_{ai0}\mathcal{T}^{0i}\,.
\end{aligned}
\label{eq_Econstraints:propagation_L0_Ti0}
\end{equation}
Moreover, the contracted second Bianchi identity yields
\begin{equation}
\begin{aligned}\frac{\varepsilon}{\mathsf{N}}\mathcal{L}_{0}G_{00}= & 2S_{0ij}R^{ji}-2\varepsilon S_{i}R_{0}{}^{i}-S^{ijk}R_{0kij}-D_{i}R_{0}{}^{i}-\varepsilon R_{00}K\\
 & -2\varepsilon S^{0ia}R_{0a0i}+\varepsilon\left(R_{0i}-R_{i0}\right)a^{i}+R_{ij}K^{ji}\,,
\end{aligned}
\label{eq_Econstraints:propagation_L0_G00}
\end{equation}
and
\begin{equation}
\begin{aligned}\frac{\varepsilon}{\mathsf{N}}\mathcal{L}_{0}R_{a0}= & -D_{i}R_{a}{}^{i}-\varepsilon R_{a0}K+\varepsilon R_{ai}a^{i}-R_{00}a_{a}-\varepsilon K_{a}{}^{i}\left(R_{0i}-R_{i0}\right)\\
 & -R_{abcd}S^{cdb}-2\varepsilon S^{0ji}R_{ai0j}-\varepsilon S^{ij0}R_{a0ij}-2S^{i}R_{a0i0}\\
 & +\frac{1}{2}D_{a}R+2S_{aij}R^{ji}+2S_{a}R_{00}\,.
\end{aligned}
\label{eq_Econstraints:propagation_L0_Gi0}
\end{equation}
Then, from Eqs.~(\ref{eq_Econstraints:propagation_L0_T00})--(\ref{eq_Econstraints:propagation_L0_Gi0})
we find the evolution equations for $\Sigma_{00}$ and $\Sigma_{a0}$:
\begin{equation}
\begin{aligned}
\label{eq_Econstraints:propagation_L0_Sigma_intermediate_1}
\mathcal{L}_{0}\Sigma_{00}= & 2\varepsilon\mathsf{N}S_{0ij}\Sigma^{ji}+2\mathsf{N}S_{i}\Sigma^{i}{}_{0}+\varepsilon\mathsf{N}\Sigma_{ij}K^{ji}-\varepsilon\mathsf{N}D_{i}\Sigma_{0}{}^{i}-\varepsilon\left(\Sigma_{0}{}^{i}+\Sigma^{i}{}_{0}\right)D_{i}\mathsf{N}-\mathsf{N}\Sigma_{00}K\,,
\\\mathcal{L}_{0}\Sigma_{a0}= & \Sigma_{00}D_{a}\mathsf{N}-\mathsf{N}(\varepsilon D_{i}\Sigma_{a}{}^{i}+\Sigma_{a0}K-2\varepsilon S_{aij}\Sigma^{ji})+\Sigma_{a}{}^{i}\left(2\mathsf{N}S_{i}-\varepsilon D_{i}\mathsf{N}\right)-\mathsf{N}K_{a}{}^{i}\left(\Sigma_{0i}-\Sigma_{i0}\right).
\end{aligned}
\end{equation}
To evaluate in what conditions is $\left(\Sigma_{00},\Sigma_{a0}\right)=\left(0,0\right)$
a solution of the system of equations~(\ref{eq_Econstraints:propagation_L0_Sigma_intermediate_1}),
it is useful to rewrite the equations in terms of $F_{ij}$, Eq.~(\ref{eq_Econstraints:definition_F}).
From the Eqs.~(\ref{eq_Econstraints:definition_Sigma_00}), (\ref{eq_Econstraints:definition_Sigma_ij})
and (\ref{eq_Econstraints:definition_F}) we find that the quantities
$F_{ij}$ and $\Sigma_{ij}$ are related by
\begin{equation}
\Sigma_{ab}=F_{ab}-h_{ab}\left(\mathrm{F}+\varepsilon\Sigma_{00}\right)\,,
\end{equation}
where $\mathrm{F}:=F_{i}{}^{i}$ represents the trace of the tensor
$F$. Then, using this relation in Eq.~(\ref{eq_Econstraints:propagation_L0_Sigma_intermediate_1})
yields
\begin{equation}
\begin{aligned}\frac{\varepsilon}{\mathsf{N}}\mathcal{L}_{0}\Sigma_{00}= & 2S_{0ji}\left[F^{ij}-h^{ij}\left(\mathrm{F}+\varepsilon\Sigma_{00}\right)\right]+K_{ji}\left[F^{ij}-h^{ij}\left(\mathrm{F}+\varepsilon\Sigma_{00}\right)\right]\\
 & +2\varepsilon S_{i}\Sigma^{i}{}_{0}-\varepsilon\Sigma_{00}K-\frac{1}{\mathsf{N}}\left(\Sigma_{0}{}^{i}+\Sigma^{i}{}_{0}\right)D_{i}\mathsf{N}-D_{i}\Sigma_{0}{}^{i}\,,
\\\frac{\varepsilon}{\mathsf{N}}\mathcal{L}_{0}\Sigma_{a0}= & -D_{i}F_{a}{}^{i}+D_{a}\mathrm{F}+\varepsilon D_{a}\Sigma_{00}-\varepsilon\Sigma_{a0}K-\varepsilon K_{a}{}^{i}\left(\Sigma_{0i}-\Sigma_{i0}\right)+\frac{\varepsilon}{\mathsf{N}}\Sigma_{00}D_{a}\mathsf{N}\\
 & +2S_{aji}\left[F^{ij}-h^{ij}\left(\mathrm{F}+\varepsilon\Sigma_{00}\right)\right]+\varepsilon\left[F_{a}{}^{i}-h_{a}{}^{i}\left(\mathrm{F}+\varepsilon\Sigma_{00}\right)\right]\left(2S_{i}-\frac{\varepsilon}{\mathsf{N}}D_{i}\mathsf{N}\right)\,.
\end{aligned}
\label{eq_Econstraints:propagation_L0_Sigma_intermediate_2}
\end{equation}
Note that these equations generalize those of the theory of general relativity
for the case of a general, metric compatible affine connection (cf.,
e.g., Ref.~\cite{Anderson_Choquet-Bruhat_York_1999}).

Now, the evolution equations~(\ref{eq:propagation_Gij}) and (\ref{eq:propagation_G0i_antisymmetric})
can be written as $F_{ab}=0$ and $\Sigma_{0a}-\Sigma_{a0}=0$, respectively.
Then, assuming $F_{ab}=0$ and $\Sigma_{0a}=\Sigma_{a0}$ hold, Eqs.~(\ref{eq_Econstraints:propagation_L0_Sigma_intermediate_2})
reduce to
\begin{equation}
\begin{aligned}\mathcal{L}_{0}\Sigma_{00}+\varepsilon\mathsf{N}D^{i}\Sigma_{i0}= & 2\mathsf{N}S^{i}\Sigma_{i0}-2\mathsf{N}S_{0i}{}^{i}\Sigma_{00}-2\mathsf{N}\Sigma_{00}K-2\varepsilon\Sigma_{i0}D^{i}\mathsf{N}\,,
\\\mathcal{L}_{0}\Sigma_{a0}-\mathsf{N}D_{a}\Sigma_{00}= & 2\Sigma_{00}D_{a}\mathsf{N}-\mathsf{N}\Sigma_{a0}K-2\mathsf{N}\left(S_{ai}{}^{i}+\varepsilon S_{a}\right)\Sigma_{00}\,.
\end{aligned}
\label{eq_Econstraints:propagation_L0_Sigma_final}
\end{equation}
or in matrix form,
\begin{equation}
\partial_{0}\mathsf{u}+\sum_{i=1}^{N-1}\mathsf{A}^{i}\partial_{i}\mathsf{u}=\mathcal{B}\left(\mathsf{u}\right)\,,\label{eq_Econstraints:propagation_L0_Sigma_final_matrix}
\end{equation}
where $\mathsf{u}=\left[\begin{array}{cccc}
\Sigma_{00} & \Sigma_{10} & \cdots & \Sigma_{N-1\,0}\end{array}\right]^{T}$, $\mathcal{B}$ is a column matrix whose entries can be read from
Eq.~(\ref{eq_Econstraints:propagation_L0_Sigma_final}) and 
\begin{equation}
\mathsf{A}^{i}=\left[\begin{array}{ccccc}
0 & \varepsilon\mathsf{N}h^{i1} & \varepsilon\mathsf{N}h^{i2} & \cdots & \varepsilon\mathsf{N}h^{i\,N-1}\\
-\mathsf{N}h_{1}{}^{i} & 0 & 0 & \cdots & 0\\
-\mathsf{N}h_{2}{}^{i} & 0 & 0 & \cdots & 0\\
\vdots & \vdots & \vdots & \ddots & \vdots\\
-\mathsf{N}h_{N-1}{}^{i} & 0 & 0 & \cdots & 0
\end{array}\right]\,.
\end{equation}
Introducing the $N\times N$ matrix
\begin{equation}
\mathsf{S}=\left[\begin{array}{cc}
1 & 0\\
0 & -\varepsilon h^{ab}
\end{array}\right]\,,
\end{equation}
where $h^{ab}$ represents the inverse of the induced metric on the
hypersurfaces $\left\{ \mathcal{V}_{\lambda}\right\} $, using the
fact that $h^{ai}\equiv h^{a}{}_{j}h^{ji}$ we find that the matrices
$\mathsf{S}\mathsf{A}^{i}$ are symmetric. Lastly, in the case when
$\varepsilon=-1$, that is, in the case when the normal vector field
$n$ is timelike, the hypersurfaces $\left\{ \mathcal{V}_{\lambda}\right\} $
are spacelike hence, $h_{ab}$ is a Riemannian metric, in particular
it is continuously positive definite and so is its inverse, $h^{ab}$.
Therefore, Eq.~(\ref{eq_Econstraints:propagation_L0_Sigma_final})
is a first-order, symmetrizable hyperbolic system hence, it has
a unique local solution. If the constraint equations are
verified at some initial hypersurface $\mathcal{V}_{\lambda}$, they
are verified at $\left\{ \mathcal{V}_{\lambda+\delta\lambda}\right\} $,
for sufficiently small $\delta\lambda$. The domain of dependence
is determined by the isotropic cone of the metric $g$.
\end{proof}
%

\section{Well-posedness of the Cauchy problem}
\label{well-posed}

Gathering the results of the previous sections, we are now in a position
to find explicit hyperbolic formulations for the field equations of
the Einstein-Cartan theory.
We will adopt the path taken in \cite{Choquet-Bruhat_York_1995, Choquet-Bruhat_York_1996} and derive second order evolution equations for the symmetric and antisymmetric parts of the second fundamental form. The derivation is rather lengthy, so we present most of the details in Appendix~\ref{sec:Wave-equation-for-K_(ab)}.

Due to the form of the field equations, if one of the torsion
components $S_{ai}{}^{i}$ or $S_{ab0}$, vanishes, the character
of the associated evolution equations changes, becoming a constraint equation.
Therefore, we have to distinguish between three cases: 
\begin{enumerate}
\item $S_{ai}{}^{i}$ and $S_{ab0}$ non-identicaly zero;
\item $S_{ai}{}^{i}$ identically zero; 
\item $S_{ab0}$ identically zero. 
\end{enumerate}
The case when both $S_{ai}{}^{i}$ and $S_{ab0}$ vanish is very special. In particular, it results in strong constraining conditions for the remaining
components of the torsion tensor which either have to be given {\em ab initio} as functions of
the spacetime or have to be directly related to the matter variables. So this case
will not be considered here.

Following the details of the derivation in Appendix \ref{sec:Wave-equation-for-K_(ab)}, it is convenient to use the gauge freedom and set
\begin{equation}
\label{gauge}
\mathcal{L}_{0}\mathsf{N}-\mathsf{N}^{2}K-2\mathsf{N}^{2}S_{0i}{}^{i}=0.
\end{equation}
Taking the trace of Eq.~(\ref{eq:evolution_h}) and using the
Jacobi's formula yields that the lapse function must be of the form
\begin{equation}
\mathsf{N}=\left|h\right|^{\frac{1}{2}}\alpha\left(t,x\right)\,,\text{ with }\text{\ensuremath{\mathcal{L}_{0}\alpha=0}}\,,
\label{eq_Wave:Algebraic_gauge}
\end{equation}
where $\alpha\left(t,x\right)$ is an otherwise arbitrary positive
scalar density, and $\left|h\right|$ represents the determinant of
the induced (Riemannian) metric $h$. Equation~(\ref{eq_Wave:Algebraic_gauge})
defines the so-called the \emph{algebraic gauge},
and the scalar density $\alpha$ is called the \emph{densitized lapse}.
\begin{remark}
	Instead of replacing the lapse function by the induced metric and the densitized lapse with (\ref{eq_Wave:Algebraic_gauge}), in practice we may include (\ref{gauge}) and treat the lapse as an additional unknown.
\end{remark}
%
\subsection{Case with $S_{ab0}$ and $S_{ai}{}^{i}$ not identically
zero}
\label{5.1}
In this case by introducing the tensor fields $P$ and $Q$ in Eqs.~\eqref{P_tensor} and \eqref{Q_tensor}, we can prove:
\begin{prop}
\label{theorem:hyperbolic_system_general_torsion}In the algebraic
gauge (\ref{eq_Wave:Algebraic_gauge}), let the densitised lapse $\alpha$, the shift vector $\beta$,
the torsion components $S_{a}$ and $S_{0ab}$, the traceless part
of the induced torsion $S_{abc}$, and the matter source canonical
stress-energy tensor $\mathcal{T}_{\alpha\beta}$ be in the Gevrey
class\footnote{The Gevrey classes are $C^\infty$ functions whose derivatives satisfy inequalities weaker than those satisfied by analytic functions \cite{Bruhat}. } of index 2. For an arbitrary cosmological constant $\Lambda$,
and for non-vanishing $S_{ab0}$ and $S_{ai}{}^{i}$, the system of
equations
\begin{equation}
	\begin{aligned}
		&\begin{aligned}-\mathsf{N}\square_{h} & K_{\left(ab\right)}+2D^{j}D_{\left(a\right|}\left(\mathsf{N}S_{0\left|b\right)j}\right)+2D^{j}D_{\left(a\right|}\left(\mathsf{N}S_{\left|b\right)j0}\right)-4D_{\left(a\right.}D_{\left.b\right)}\left(\mathsf{N}S_{0i}{}^{i}\right)+2D^{j}D_{\left(a\right|}\left(\mathsf{N}S_{0j\left|b\right)}\right)\\
			& +2D_{i}\mathcal{L}_{0}S^{i}{}_{\left(ab\right)}-2D^{i}D_{i}\left(\mathsf{N}S_{0\left(ab\right)}\right)+2D_{\left(a\right|}\mathcal{L}_{0}S_{\left|b\right)i}{}^{i}-2\mathcal{L}_{0}D_{\left(a\right.}S_{\left.b\right)}-\frac{2}{\mathsf{N}}S_{\left(a\right|}\mathcal{L}_{0}D_{\left|b\right)}\mathsf{N}\\
			& +P_{\left(ab\right)}+Q_{\left(ab\right)}=8\pi\mathcal{L}_{0}\left[\mathcal{T}_{\left(ab\right)}-\frac{h_{ab}}{N-2}\left(\mathcal{T}+\frac{\Lambda}{4\pi}\right)\right]-16\pi D_{\left(a\right|}\left(\mathsf{N}\mathcal{T}_{\left|b\right)0}\right)\,,
		\end{aligned}
		\\
		&\begin{aligned}-\frac{1}{\mathsf{N}}\mathcal{L}_{0}^{2} & S_{ab0}+4S_{ab}{}^{j}D_{j}\left(\mathsf{N}S_{0i}{}^{i}\right)-2\mathcal{L}_{0}D_{\left[a\right.}S_{\left.b\right]}+D_{i}\mathcal{L}_{0}S_{ab}{}^{i}\\
			& +2D_{\left[a\right|}\mathcal{L}_{0}S_{\left|b\right]i}{}^{i}+\frac{2}{\mathsf{N}}S_{\left[a\right|}\mathcal{L}_{0}D_{\left|b\right]}\mathsf{N}+P_{\left[ab\right]}+Q_{\left[ab\right]}=8\pi\mathcal{L}_{0}\mathcal{T}_{\left[ab\right]}\,,
		\end{aligned}
		\\
		&\begin{aligned}\frac{1}{\mathsf{N}}\mathcal{L}_{0} & S_{bi}{}^{i}+\frac{1}{\mathsf{N}}D_{i}\left(\mathsf{N}S_{0b}{}^{i}\right)-\frac{1}{\mathsf{N}}D_{b}\left(\mathsf{N}S_{0i}{}^{i}\right)+\frac{1}{\mathsf{N}}S^{i}{}_{b0}D_{i}\mathsf{N}\\
			& +S_{bij}K^{ji}-2S_{bi0}S^{i}-S^{i}K_{bi}+S_{b}K=8\pi\mathcal{T}_{\left[0b\right]}\,,
		\end{aligned}
		\\
		&\mathcal{L}_{0}h_{ab}=2\mathsf{N}K_{\left(ab\right)}+4\mathsf{N}S_{0\left(ab\right)}\,,
		\label{eq_hyperbolic_system:general_torsion}
	\end{aligned}
\end{equation}
form a non-strictly hyperbolic  system for the variables $h_{ab}$,
$K_{\left(ab\right)}$, $S_{ab0}$ and $S_{ai}{}^{i}$. If the Cauchy
data in a slice $\mathcal{V}$ is in the Gevrey class of
index 2, the Cauchy problem has a unique solution in a neighborhood
of $\mathcal{V}$, with domain of dependence determined
by the isotropic cone of the metric $g$.
\end{prop}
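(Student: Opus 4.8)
The plan is to regard \eqref{eq_hyperbolic_system:general_torsion} as a quasilinear system for the unknowns $U=(h_{ab},K_{(ab)},S_{ab0},S_{ai}{}^{i})$, to freeze the coefficients at an arbitrary point, and to read off the principal symbol as a polynomial in a covector $\xi_{\mu}=(\xi_{0},\xi_{a})$. Because the equations carry different differential orders — the equations for $K_{(ab)}$ and for $S_{ab0}$ are of second order while those for $S_{ai}{}^{i}$ and for $h_{ab}$ are of first order — the first task is to assign Leray--Volevich indices $(s_{i},t_{j})$ to the equations and to the unknowns so that each entry of the operator has order at most $s_{i}+t_{j}$ and the principal part is well defined. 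A convenient choice gives $K_{(ab)}$ the d'Alembertian $\square_{h}$ as leading operator, gives $S_{ab0}$ the operator $\mathcal{L}_{0}^{2}$, and gives $S_{ai}{}^{i}$ and $h_{ab}$ the transport operator $\mathcal{L}_{0}$; with it, the purely spatial coupling of $S_{ab0}$ into the $K_{(ab)}$ equation and the mixed space--time couplings of $S_{ai}{}^{i}$ turn out to be subprincipal.

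With these weights the principal symbol is block triangular and its characteristic determinant factorises. The wave operator acting on $K_{(ab)}$ contributes the light-cone polynomial $\mathcal{H}(\xi):=g^{\mu\nu}\xi_{\mu}\xi_{\nu}$, raised to the number of independent components of $K_{(ab)}$; the operator $\mathcal{L}_{0}^{2}$ acting on $S_{ab0}$ contributes $(n^{\mu}\xi_{\mu})^{2}$; and the transport operators contribute further powers of $n^{\mu}\xi_{\mu}$. Thus the characteristic determinant is, up to a nonzero factor, a product of the light-cone polynomial and a power of $n^{\mu}\xi_{\mu}$. The decisive structural point — and the origin of the non-strict hyperbolicity — is the factor $(n^{\mu}\xi_{\mu})^{2}$ forced by the $\mathcal{L}_{0}^{2}S_{ab0}$ equation, which carries no spatial principal part: it produces a characteristic of multiplicity two, so that, in contrast with the general-relativistic limit where only powers of the single factor $\mathcal{H}$ survive, the system here fails to be strictly (Leray) hyperbolic although every factor is separately hyperbolic.

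Next I would verify the hypotheses of the Leray--Ohya theorem. Taking $\varepsilon=-1$, so that the slices $\{\mathcal{V}_{\lambda}\}$ are spacelike and $h_{ab}$ is Riemannian, the light-cone factor $\mathcal{H}(\xi)$ has two real simple $\xi_{0}$-roots for every real $\xi_{a}\neq0$, and $n^{\mu}\xi_{\mu}$ is hyperbolic with respect to the timelike normal $n$; moreover the algebraic gauge \eqref{gauge}, \eqref{eq_Wave:Algebraic_gauge} guarantees that the leading operator of the $K_{(ab)}$ equation is exactly the clean d'Alembertian. Hence the characteristic determinant is a product of hyperbolic polynomials with a characteristic of maximal multiplicity $m=2$, and the system is in a Cauchy-regular, non-strictly hyperbolic form. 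The Leray--Ohya theory for such systems then yields a unique local solution in the Gevrey class of index $m/(m-1)=2$, provided the Cauchy data and the prescribed fields $\alpha,\beta,S_{a},S_{0ab}$, the traceless part of $S_{abc}$ and $\mathcal{T}_{\alpha\beta}$ lie in the Gevrey class of index $2$; since the only propagating characteristic surfaces are the null cones of $\mathcal{H}$, the domain of dependence is bounded by the light cones of $g$ and the evolution is causal.

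I expect the main obstacle to be the explicit computation and factorisation of the characteristic determinant for the full tensorial, coupled system. One must confirm that, for the chosen Leray--Volevich weights, all the off-diagonal couplings — in particular the second-order spatial coupling $D^{j}D_{(a|}(\mathsf{N}S_{|b)j0})$ of $S_{ab0}$ into the $K_{(ab)}$ equation and the mixed derivatives $D_{(a|}\mathcal{L}_{0}S_{|b)i}{}^{i}$ of $S_{ai}{}^{i}$ — are genuinely subprincipal, so that the block-triangular factorisation, and hence the multiplicity-two structure, is preserved; checking that the repeated factor does not exceed multiplicity two is precisely what pins the Gevrey index at $2$. A secondary, more technical, point is to bring the reduced system into the exact diagonalised form required by the Leray--Ohya statement and to confirm that the lower-order tensors $P_{(ab)},Q_{(ab)}$ collected in the appendix do not contribute to the principal symbol.
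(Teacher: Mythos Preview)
Your approach is essentially the one taken in the paper: assign Leray--Volevich weights, identify the diagonal principal operators $\square_{h}$, $\mathcal{L}_{0}^{2}$, $\mathcal{L}_{0}$, factor the characteristic determinant, and invoke Leray--Ohya. However, the step you flag as ``the main obstacle'' is in fact where your plan breaks: the off-diagonal couplings $D^{j}D_{(a|}(\mathsf{N}S_{|b)j0})$ and $D_{(a|}\mathcal{L}_{0}S_{|b)i}{}^{i}$ are \emph{not} subprincipal for any choice of weights that keeps $\square_{h}$, $\mathcal{L}_{0}^{2}$ and $\mathcal{L}_{0}$ as the diagonal operators. With the paper's weights (value $2$ for every unknown; $n=0$ for the $K_{(ab)}$ and $S_{ab0}$ equations, $n=1$ for the $S_{bi}{}^{i}$ and $h_{ab}$ equations) these terms are of the maximal allowed order and therefore \emph{principal}. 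The resolution, which the paper uses, is not subprincipality but \emph{triangularity}: ordering the unknowns as $(h_{ab},S_{bi}{}^{i},S_{ab0},K_{(ab)})$, the principal matrix is lower triangular, so its determinant is still the product of the diagonal symbols even though off-diagonal principal entries are present.

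A related imprecision concerns the multiplicity. The paper computes the characteristic determinant for each index pair $(a,b)$ as $\mathsf{N}^{-1}(-\mathsf{N}^{-2}\xi_{0}^{2}+h^{ij}\xi_{i}\xi_{j})(\xi_{0})^{4}$ when $a\neq b$ and $-\mathsf{N}(-\mathsf{N}^{-2}\xi_{0}^{2}+h^{ij}\xi_{i}\xi_{j})(\xi_{0})^{2}$ when $a=b$; the root $\xi_{0}=0$ therefore has multiplicity $4$, not $2$, in the full determinant. Your formula $m/(m-1)=2$ with $m=2$ happens to give the right answer, but the justification is not the total multiplicity in the determinant: it is that each \emph{diagonal} operator is a product of at most two strictly hyperbolic factors (namely $\mathcal{L}_{0}^{2}$), which is what the triangular Leray--Ohya theory uses to fix the Gevrey index at $2$.
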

\begin{proof}
Using the field equations~(\ref{eq:efe1}) and setting $\varepsilon=-1$,
Eqs.~(\ref{eq:propagation_G0i_antisymmetric}), (\ref{eq:evolution_h}),
(\ref{eq_Wave:Wave_eq_densitized_symmetric}), (\ref{eq_Wave:Wave_eq_densitized_antisymmetric})
form the system (\ref{eq_hyperbolic_system:general_torsion}). We show that the system can be cast in a form for which local well-posedness is guaranteed.

Note that the tensor $P_{\left(ab\right)}$, Eq.~\eqref{P_tensor}, appearing in the system contains terms with the intrinsic
Riemann and Ricci tensors, namely $\ps{\left(N-1\right)}{R_{i\left(ab\right)}{}^{j}}K_{j}{}^{i}$
and $\ps{\left(N-1\right)}{R_{\left(a\right|}{}^{j}}K_{\left|b\right)j}$.
These tensors depend on derivatives up to second order of the induced
metric $h$, and derivatives up to first order of the induced torsion.
Moreover, Eq.~(\ref{eq_Wave:Wave_eq_densitized_symmetric}) explicitly
contains terms with the second order derivatives of $S_{ab0}$ and
$S_{ai}{}^{i}$. Nonetheless, the principal operator of the system~(\ref{eq_hyperbolic_system:general_torsion})
can be written as a triangular matrix with diagonal elements either
$\square_{h}$, $\mathcal{L}_{0}^{2}$ or $\mathcal{L}_{0}$, where
we have omitted multiplicative $\mathsf{N}$ factors. Then, choosing
the value 2 for the Leray-Volevic weights\footnote{Leray-Volevic weights are useful integer quantities to characterize the principal operator of quasilinear partial differential equations \cite{Leray_lecture_notes_1953,Bruhat}.} for all the unknowns, and
the weights
\begin{equation}
\begin{aligned}\begin{aligned}n[\mathcal{L}_{0}S_{bi}{}^{i}] & =n[\mathcal{L}_{0}h_{ab}]=1\,, & n[\square_{h}K_{\left(ab\right)}] & =n[\mathcal{L}_{0}^{2}S_{ab0}]=0\,,\end{aligned}
\end{aligned}
\end{equation}
for the equations, we conclude that the system is quasilinear. 

Now to complete the proof that the system is Cauchy regular we compute the characteristic determinant of the principal part. The
characteristic determinant for each pair $\left(ab\right)$ is $\mathsf{N}^{-1}\left(-\mathsf{N}^{-2}\xi_{0}^{2}+h^{ij}\xi_{i}\xi_{j}\right)\left(\xi_{0}\right)^{4}$
or $-\mathsf{N}\left(-\mathsf{N}^{-2}\xi_{0}^{2}+h^{ij}\xi_{i}\xi_{j}\right)\left(\xi_{0}\right)^{2}$,
when $a\neq b$ or $a=b$, respectively. In either case, given the
multiplicity of the characteristics, the system is non-strictly hyperbolic 
in the Leray-Ohya sense (see \cite{Choquet-Bruhat_Book_2009}
for a review). For initial data in the Gevrey class of index 2, the
general theory of Leray and Ohya guarantees the local existence and
uniqueness of a solution. The characteristics at a point are the isotropic
cone of the metric $g$ and the normal to the slice 
containing the point. Since the normal is interior to the cone, the
domain of dependence of the solution is the light cone.
\end{proof}
%
\subsection{Case with $S_{ab0}$ identically zero}
\label{5.2}

It is useful
to introduce the following decomposition. Let $\bar{S}_{abc}$
represent the traceless part of the induced torsion tensor, $S_{abc}$,
such that
\begin{equation}
\bar{S}_{abc}:=S_{abc}+\frac{2}{N-2}h_{c[a}S_{b]i}{}^{i}\,.
\end{equation}
In that case, we find the following result:
\begin{lem}
\label{lemma:Laplace_Saii}For a spacetime manifold $\left(\mathcal{M},g,S\right)$
of dimension $N>3$, if the components $S_{ab0}$ are identically
zero, $S_{ai}{}^{i}$ verify the following constraint equations
\begin{equation}
\begin{aligned}
\tilde{D}^{a}\tilde{D}_{a}S_{bi}{}^{i}&-D_{b}D^{a}S_{ai}{}^{i}
+\frac{N-2}{N-3}\left(D^{a}D_{i}\bar{S}_{ab}{}^{i}-D^{a}D_{[a}a_{b]}\right)+F_{b}\\= & \frac{N-2}{N-3}8\pi D^{a}\mathcal{T}_{\left[ab\right]}+8\pi S^{ji}{}_{i}\mathcal{T}_{bj}
  +S_{bi}{}^{i}\left(\frac{2}{N-2}\Lambda-4\pi\mathcal{T}\right)\,,
\end{aligned}
\label{eq_hyperbolic_system:constraint_trace_induced_torsion}
\end{equation}
where $\tilde{D}$ represents the Levi-Civita connection associated
with $h_{ab}$, and the 1-form
\begin{align}
F_{b}= & \left(K_{ab}K-2K^{i}{}_{a}K_{bi}-a_{a}a_{b}+2a_{a}S_{b}-2S_{0ai}K_{b}{}^{i}-D_{b}a_{a}+\frac{1}{\mathsf{N}}\mathcal{L}_{0}K_{ab}\right)S^{ai}{}_{i}\nonumber \\
 & +\left(\frac{N-2}{N-3}\right)D^{a}\left(2S_{ab}{}^{j}S_{ji}{}^{i}-2K_{ai}K^{i}{}_{b}+2S_{[a}a_{b]}+2S_{0\left[a\right|i}K_{\left|b\right]}{}^{i}\right)\\
 & -2K_{kb}{}^{j}S^{kc}{}_{c}S_{ji}{}^{i}-K_{cbk}K^{ckj}S_{ji}{}^{i}-2S^{jk}{}_{k}D_{j}S_{bi}{}^{i}-K_{kb}{}^{j}D^{k}S_{ji}{}^{i}\nonumber \\
 & -2S_{b}{}^{jk}D_{k}S_{ji}{}^{i}-D^{a}\left(K_{ab}{}^{j}S_{ji}{}^{i}\right)\,,\nonumber 
\end{align}
contains terms with only first order derivatives of the
unknowns.
\end{lem}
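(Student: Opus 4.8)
The plan is to derive the stated elliptic constraint from the antisymmetric part of the symmetric-tensor evolution equation \eqref{eq:propagation_Gij}, rather than from the wave equations of Proposition~\ref{theorem:hyperbolic_system_general_torsion}. First I would record the immediate consequences of $S_{ab0}\equiv 0$: by the Frobenius constraint \eqref{eq:constraint_K_S_relation} one has $K_{[ab]}=-S_{ab0}\equiv 0$, so $K_{ab}$ is symmetric throughout the foliated region, and consequently $\mathcal{L}_0 K_{[ab]}=-\mathcal{L}_0 S_{ab0}=0$. These two facts are what collapse the temporal structure of the problem and turn an evolution equation into a spatial one.

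Next I would take the antisymmetric part (in $[ab]$) of \eqref{eq:propagation_Gij}. Because $K_{ab}$ is symmetric, every quadratic term $K_{ai}K^i{}_b$, $K_{ai}K_b{}^i$, $K_{ab}K$ is symmetric in $a,b$ and drops out, as do $\mathcal{L}_0 K_{[ab]}$ and $a_{[a}a_{b]}$; the trace terms proportional to $h_{ab}\mathcal{T}$ and $h_{ab}\Lambda$ also vanish. What survives is a \emph{purely spatial} relation (taking $\varepsilon=-1$, the slices being spacelike as in the Cauchy setting)
\begin{equation}
\ps{(N-1)}{R_{[ab]}}+\varepsilon D_{[a}a_{b]}+2S_{[a}a_{b]}+2\varepsilon S_{0[b|i}K_{|a]}{}^i=8\pi\mathcal{T}_{[ab]}\,.
\label{eq:plan_star}
\end{equation}
I would then express the antisymmetric part of the \emph{intrinsic} Ricci tensor through the induced torsion, using the $(N-1)$-dimensional analogue of \eqref{eq:Ricci_tensor_connnection} with the induced connection $C_{ab}^c=\Gamma_{ab}^c+\mathcal{K}_{ab}{}^c$ and the contorsion identity $\mathcal{K}_{[ab]}{}^c=S_{ab}{}^c$. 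This writes $\ps{(N-1)}{R_{[ab]}}$ as first-order spatial derivatives of $S_{abc}$ (schematically $D_i S_{ab}{}^i$ and $D_{[a}S_{b]i}{}^i$) plus torsion-quadratic terms.

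The decisive step is to take the spatial divergence $D^a$ of \eqref{eq:plan_star}. This raises the induced torsion to second spatial order: the piece $D^a D_{[a}S_{b]i}{}^i$ produces the Laplacian-minus-gradient-of-divergence combination $D^a D_a S_{bi}{}^i-D^a D_b S_{ai}{}^i$, while $D^a D_i S_{ab}{}^i$ gives a double divergence. I would then split $S_{abc}$ into its trace and traceless parts via $\bar{S}_{abc}=S_{abc}+\tfrac{2}{N-2}h_{c[a}S_{b]i}{}^i$, commute the covariant derivatives so as to trade the torsionful connection $D$ for the Levi-Civita connection $\tilde D$ of $h$ (the commutators generating intrinsic-curvature and torsion remainders), and normalize so that the coefficient of $\tilde D^a\tilde D_a S_{bi}{}^i$ is unity. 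This normalization is exactly where the factor $\tfrac{N-2}{N-3}$ appears and where $N>3$ is required, the denominator $N-3=(N-1)-2$ being the spatial trace factor. The principal part then assembles into $\tilde D^a\tilde D_a S_{bi}{}^i-D_b D^a S_{ai}{}^i+\tfrac{N-2}{N-3}\big(D^a D_i\bar S_{ab}{}^i-D^a D_{[a}a_{b]}\big)$, and the matter divergence $\tfrac{N-2}{N-3}8\pi D^a\mathcal{T}_{[ab]}$ comes straight from the right-hand side of \eqref{eq:plan_star}. Finally I would use the field equations \eqref{eq:efe1} to eliminate the scalar curvature in favour of $\mathcal{T}$ and $\Lambda$, producing the algebraic sources $8\pi S^{ji}{}_i\mathcal{T}_{bj}$ and $S_{bi}{}^i\big(\tfrac{2}{N-2}\Lambda-4\pi\mathcal{T}\big)$, and the \emph{symmetric} part of \eqref{eq:propagation_Gij} to re-express any $\ps{(N-1)}{R_{ab}}$ thrown up by the commutators (contracted with $S^{ai}{}_i$) in terms of $\tfrac1{\mathsf N}\mathcal{L}_0 K_{ab}$; this is precisely the origin of the $\tfrac1{\mathsf N}\mathcal{L}_0 K_{ab}$ term inside $F_b$.

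The main obstacle is the commutator bookkeeping in the divergence step: one must carefully separate the genuinely second-order contributions, which must close into exactly the stated elliptic operator acting on $S_{bi}{}^i$ (and on the prescribed $\bar S$), from the numerous first-order remainders that are to be collected into $F_b$. In particular, verifying that no spurious second-order derivative of $S_{bi}{}^i$ survives outside the combination $\tilde D^a\tilde D_a S_{bi}{}^i-D_b D^a S_{ai}{}^i$, and confirming the precise $\tfrac{N-2}{N-3}$ coefficients generated by the trace decomposition, is the delicate part of the calculation.
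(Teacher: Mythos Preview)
Your proposal is correct and follows essentially the same route as the paper: take the antisymmetric part of \eqref{eq:propagation_Gij} (using $K_{[ab]}=0$), express $\ps{(N-1)}{R_{[ab]}}$ through the induced torsion via the contorsion decomposition, split $S_{abc}$ into trace and traceless parts to isolate $D_{[a}S_{b]i}{}^{i}$ with the $\tfrac{N-3}{N-2}$ coefficient, then apply $D^{a}$ and use the Ricci identity \eqref{eq:Riemann_tensor_definition} together with \eqref{eq:propagation_Gij} once more to absorb the curvature commutators. The only cosmetic difference is that the paper performs the trace/traceless split \emph{before} taking the divergence (recording the intermediate equation \eqref{eq_hyperbolic_system:constraint_trace_induced_torsion_intermediate}), whereas you propose doing it after; either ordering yields the same principal operator and the same $F_{b}$.
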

\begin{proof}
When all components $S_{ab0}$ vanish identically, we show that
if the spacetime is of dimension $N>3$, Eq.~(\ref{eq_Wave:Wave_eq_densitized_antisymmetric}) or, equivalently, the antisymmetric part of Eq.~(\ref{eq:propagation_Gij})
yields a constraint equation for the component $S_{ai}{}^{i}$ of
the induced torsion.

Let the components $S_{ab0}$ vanish identically. From Eq.~(\ref{Appendix_eq:Full_Ricci_metric_Ricci_relation})
we find
\begin{equation}
\ps{\left(N-1\right)}{R_{\left[ab\right]}}=D_{i}S_{ab}{}^{i}-D_{[a|}K_{i|b]}{}^{i}+2S_{i[a|}{}^{j}K_{j|b]}{}^{i}+K_{i[b|}{}^{j}K_{|a]j}{}^{i}-K_{ij}{}^{i}S_{ab}{}^{j}\,,
\end{equation}
where $K_{ab}{}^{c}$ represent the components of the induced contorsion
on an hypersurface $\mathcal{V}$. Then, using the field equations~(\ref{eq:propagation_Gij})
for $\varepsilon=-1$ we find
\begin{equation}
\begin{aligned}\frac{N-3}{N-2}\left(D_{a}S_{bi}{}^{i}-D_{b}S_{ai}{}^{i}\right)+D_{i}\bar{S}_{ab}{}^{i}+2S_{ab}{}^{j}S_{ji}{}^{i}\\
-2K_{ai}K^{i}{}_{b}-D_{[a}a_{b]}+2S_{[a}a_{b]}+2S_{0\left[a\right|i}K_{\left|b\right]}{}^{i} & =8\pi\mathcal{T}_{\left[ab\right]}\,.
\end{aligned}
\label{eq_hyperbolic_system:constraint_trace_induced_torsion_intermediate}
\end{equation}
Taking the $D^{a}$ derivative of this equation, and considering
Eqs.~(\ref{eq:Riemann_tensor_definition}) and (\ref{eq:propagation_Gij}),
and the fact that $K_{ab}=K_{\left(ab\right)}$, implies Eq.~(\ref{eq_hyperbolic_system:constraint_trace_induced_torsion}).
\end{proof}
\begin{remark} We remark that for a spacetime of dimension $N=3$, the traceless
part of the induced torsion, $\bar{S}_{abc}$, is trivial and Eq.~(\ref{eq_hyperbolic_system:constraint_trace_induced_torsion_intermediate})
becomes an algebraic constraint equation for $S_{ai}{}^{i}$.
\end{remark}
Lemma~\ref{lemma:Laplace_Saii} allows us to find a quasilinear wave
equation for $S_{ai}{}^{i}$, provided an integrability condition.
Taking the $\mathcal{L}_{0}$ derivative of Eq.~(\ref{eq:propagation_G0i_antisymmetric})
and using Eq.~(\ref{eq_hyperbolic_system:constraint_trace_induced_torsion})
we find
\begin{equation}
\begin{aligned}
\square_{h}S_{bi}{}^{i}&-D_{b}D^{a}S_{ai}{}^{i}-\frac{1}{\mathsf{N}}\mathcal{L}_{0}D_{i}S_{0b}{}^{i}+\frac{1}{\mathsf{N}}\mathcal{L}_{0}D_{b}S_{0i}{}^{i}-\frac{1}{\mathsf{N}^{2}}S_{0b}{}^{i}\mathcal{L}_{0}D_{i}\mathsf{N}+\frac{1}{\mathsf{N}^{2}}S_{0i}{}^{i}\mathcal{L}_{0}D_{b}\mathsf{N}\\
&+\frac{N-2}{N-3}\left(D^{a}D_{i}\bar{S}_{ab}{}^{i}-D^{a}D_{[a}a_{b]}\right)+{\mathcal  G}_{b} =\frac{N-2}{N-3}8\pi D^{a}\mathcal{T}_{\left[ab\right]}-\frac{8\pi}{\mathsf{N}}\mathcal{L}_{0}\mathcal{T}_{\left[0b\right]}\\
 & +8\pi S^{ji}{}_{i}\mathcal{T}_{bj}+S_{bi}{}^{i}\left(\frac{2}{N-2}\Lambda-4\pi\mathcal{T}\right)\,,
\end{aligned}
\end{equation}
where
\begin{align}
{\mathcal G}_{b} & =F_{b}-\frac{1}{\mathsf{N}^{3}}\left(\mathcal{L}_{0}\mathsf{N}\right)\left(S_{0i}{}^{i}D_{b}\mathsf{N}-S_{0b}{}^{i}D_{i}\mathsf{N}-\mathcal{L}_{0}S_{bi}{}^{i}\right)
 -\frac{1}{\mathsf{N}^{2}}\left(D_{i}\mathsf{N}\right)\left(\mathcal{L}_{0}S_{0b}{}^{i}\right)\\
 &+\frac{1}{\mathsf{N}^{2}}\left(D_{b}\mathsf{N}\right)\left(\mathcal{L}_{0}S_{0i}{}^{i}\right)
  -\frac{1}{\mathsf{N}}\mathcal{L}_{0}\left(S_{bij}K^{ji}-S^{i}K_{bi}+S_{b}K\right)\,,\nonumber 
\end{align}
contains terms with only first order derivatives of the
unknowns.

The term $D_{b}D^{a}S_{ai}{}^{i}$ spoils the hyperbolicity
of the equation. In Appendix~\ref{sec:Wave-equation-for-K_(ab)}, to
find a wave equation for the components $K_{\left(ab\right)}$ of
the second fundamental form, we made use of the gauge freedom to constraint
the lapse function. Therefore, the term $D_{b}D^{a}S_{ai}{}^{i}$
cannot be further gauged away without spoiling the hyperbolicity of
the equation for $K_{\left(ab\right)}$. Nonetheless, from a physical
point of view, it is reasonable to assume that the divergence $\tilde{D}_{a}S^{ai}{}_{i}$
is either a known function of the spacetime or related with the matter
 variables when considering a particular matter model.
So we assume that $\tilde{D}^{a}S_{ai}{}^{i}$
or, similarly, the quantity$D^{a}S_{ai}{}^{i}$ is a known differentiable
function of the spacetime. In that case we can prove the following:
\begin{prop}
\label{prop2}
Consider a spacetime manifold of dimension $N>3$. Let $D^{a}S_{ai}{}^{i}=Q$,
where $Q$ is some known function of the spacetime in the Gevrey class
of index 2. In the algebraic gauge (\ref{eq_Wave:Algebraic_gauge}), let the densitized lapse $\alpha$,
the shift vector $\beta$, the torsion components $S_{a}$, $S_{0ab}$
and $\bar{S}_{abc}$, and the matter source canonical stress-energy
tensor $\mathcal{T}_{\alpha\beta}$ in the Gevrey class of index 2.
For an arbitrary cosmological constant $\Lambda$, and for vanishing
$S_{ab0}$, the system of equations
\begin{equation}
	\begin{aligned}
		&\begin{aligned}-\mathsf{N}\square_{h} & K_{\left(ab\right)}+2D_{\left(a\right|}\mathcal{L}_{0}S_{\left|b\right)i}{}^{i}+2D^{j}D_{\left(a\right|}\left(\mathsf{N}S_{0\left|b\right)j}\right)+2D_{i}\mathcal{L}_{0}S^{i}{}_{\left(ab\right)}-2D^{i}D_{i}\left(\mathsf{N}S_{0\left(ab\right)}\right)\\
			& -2\mathcal{L}_{0}D_{\left(a\right.}S_{\left.b\right)}-4D_{\left(a\right.}D_{\left.b\right)}\left(\mathsf{N}S_{0i}{}^{i}\right)+2D^{j}D_{\left(a\right|}\left(\mathsf{N}S_{0j\left|b\right)}\right)-\frac{2}{\mathsf{N}}S_{\left(a\right|}\mathcal{L}_{0}D_{\left|b\right)}\mathsf{N}\\
			& +P_{\left(ab\right)}+Q_{\left(ab\right)}=8\pi\mathcal{L}_{0}\left[\mathcal{T}_{\left(ab\right)}-\frac{h_{ab}}{N-2}\left(\mathcal{T}+\frac{\Lambda}{4\pi}\right)\right]-16\pi D_{\left(a\right|}\left(\mathsf{N}\mathcal{T}_{\left|b\right)0}\right)
		\end{aligned}
		\\
		&\begin{aligned}\square_{h}S_{bi}{}^{i} & -D_{b}Q-\frac{1}{\mathsf{N}}\mathcal{L}_{0}D_{i}S_{0b}{}^{i}+\frac{1}{\mathsf{N}}\mathcal{L}_{0}D_{b}S_{0i}{}^{i}-\frac{1}{\mathsf{N}^{2}}S_{0b}{}^{i}\mathcal{L}_{0}D_{i}\mathsf{N}+\frac{1}{\mathsf{N}^{2}}S_{0i}{}^{i}\mathcal{L}_{0}D_{b}\mathsf{N}\\
			& +\frac{N-2}{N-3}\left(D^{a}D_{i}\bar{S}_{ab}{}^{i}-D^{a}D_{[a}a_{b]}\right)+G_{b}=\frac{N-2}{N-3}8\pi D^{a}\mathcal{T}_{\left[ab\right]}-\frac{8\pi}{\mathsf{N}}\mathcal{L}_{0}\mathcal{T}_{\left[0b\right]}\\
			& +8\pi S^{ji}{}_{i}\mathcal{T}_{bj}+S_{bi}{}^{i}\left(\frac{2}{N-2}\Lambda-4\pi\mathcal{T}\right)\,,
		\end{aligned}
		\\
		&\begin{aligned}\mathcal{L}_{0}h_{ab} & =2\mathsf{N}K_{\left(ab\right)}+4\mathsf{N}S_{0\left(ab\right)}\,,
		\end{aligned}
		\label{eq_hyperbolic_system:vanishing_Sab0_torsion_components}
	\end{aligned}
\end{equation}
form a non-strictly hyperbolic system for the variables $h_{ab}$, $K_{\left(ab\right)}$
and $S_{ai}{}^{i}$. If the Cauchy data in a slice $\mathcal{V}$
is in the Gevrey class of index 2, the Cauchy problem has a unique
solution in a neighborhood of $\mathcal{V}$, with domain
of dependence determined by the isotropic cone of the metric $g$.
\end{prop}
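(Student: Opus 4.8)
The plan is to follow exactly the same strategy as in Proposition~\ref{theorem:hyperbolic_system_general_torsion}, adapting it to the fact that the second fundamental form is now symmetric ($K_{[ab]}=-S_{ab0}=0$) and the evolution equation for $S_{ab0}$ has disappeared, being replaced by the wave equation for $S_{ai}{}^{i}$ obtained above Lemma~\ref{lemma:Laplace_Saii}. First I would verify that, upon setting $\varepsilon=-1$ and $S_{ab0}\equiv0$ in the field equations~(\ref{eq:efe1}) and using the wave equation~(\ref{eq_Wave:Wave_eq_densitized_symmetric}) for $K_{(ab)}$ together with the newly derived wave equation for $S_{ai}{}^{i}$ and the evolution equation~(\ref{eq:evolution_h}) for $h_{ab}$, one recovers precisely the system~(\ref{eq_hyperbolic_system:vanishing_Sab0_torsion_components}). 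The crucial point where the hypothesis $D^{a}S_{ai}{}^{i}=Q$ enters is that the offending non-principal term $D_bD^aS_{ai}{}^{i}$, which spoiled hyperbolicity in the general discussion, now becomes $D_bQ$ — a known source term involving derivatives of the prescribed function $Q$ rather than of the unknown $S_{ai}{}^{i}$. This is what makes the principal part of the second equation a clean $\square_h$ acting on $S_{ai}{}^{i}$.

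Next I would examine the principal operator of the full system~(\ref{eq_hyperbolic_system:vanishing_Sab0_torsion_components}). The three equations have leading operators $\square_h$ on $K_{(ab)}$, $\square_h$ on $S_{ai}{}^{i}$, and $\mathcal{L}_0$ on $h_{ab}$ (modulo $\mathsf{N}$ factors). As in the previous proposition, I would assign Leray--Volevic weights so that the system is Cauchy regular: weight~$2$ for all unknowns, with equation weights $n[\square_h K_{(ab)}]=n[\square_h S_{ai}{}^{i}]=0$ and $n[\mathcal{L}_0 h_{ab}]=1$. One must check that every non-principal term — in particular the terms $\ps{(N-1)}{R_{i(ab)}{}^{j}}K_j{}^i$ inside $P_{(ab)}$ and the curvature terms $D^aD_i\bar S_{ab}{}^i$ and $D^aD_{[a}a_{b]}$ in the second equation — respects these weights; these involve at most second derivatives of $h$ and first derivatives of the torsion, which are subleading relative to the weight~$2$ assignment, so the system remains quasilinear of the required form.

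With Cauchy regularity established, I would compute the characteristic determinant of the principal part. Since both wave operators are $\square_h$ and the metric evolution contributes factors of $\xi_0$, the determinant factorizes into products of the characteristic polynomial $\left(-\mathsf{N}^{-2}\xi_0^2+h^{ij}\xi_i\xi_j\right)$ of $\square_h$ and powers of $\xi_0$, exhibiting the same non-trivial multiplicity of characteristics as before. This places the system in the Leray--Ohya class of non-strictly hyperbolic systems, so the general Leray--Ohya theory yields local existence and uniqueness for Cauchy data in the Gevrey class of index~$2$, with characteristics given by the isotropic cone of $g$ and the normal to the slice, whence the domain of dependence is the light cone.

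The main obstacle I anticipate is \emph{not} the hyperbolicity bookkeeping, which is essentially parallel to Proposition~\ref{theorem:hyperbolic_system_general_torsion}, but rather the verification that the second-order curvature terms $D^aD_i\bar S_{ab}{}^i-D^aD_{[a}a_{b]}$ appearing in the $S_{ai}{}^{i}$ equation do not secretly contain second derivatives of $S_{ai}{}^{i}$ itself once the constraint~(\ref{eq_hyperbolic_system:constraint_trace_induced_torsion}) and the acceleration identity~(\ref{eq:constraint_a_N}) are fully expanded. One must confirm that, after substituting $D^aS_{ai}{}^{i}=Q$, all such contributions are either absorbed into the prescribed source or carry at most first derivatives of the unknown, so that $\square_h$ genuinely remains the principal operator for $S_{ai}{}^{i}$ and the weight assignment is consistent.
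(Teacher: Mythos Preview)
Your proposal is correct and follows essentially the same approach as the paper's own proof: the paper likewise notes that the principal operator is triangular with diagonal entries $\square_h$, $\square_h$, $\mathcal{L}_0$, assigns the same Leray--Volevic weights (value~$2$ for all unknowns, $n[\square_h K_{(ab)}]=n[\square_h S_{bi}{}^{i}]=0$, $n[\mathcal{L}_0 h_{ab}]=1$), computes the characteristic determinant for each pair $(ab)$ as $-\mathsf{N}\bigl(-\mathsf{N}^{-2}\xi_0^2+h^{ij}\xi_i\xi_j\bigr)^2\xi_0$, and then defers to Proposition~\ref{theorem:hyperbolic_system_general_torsion} for the Leray--Ohya conclusion. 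Your anticipated obstacle is not a real one, since $\bar S_{abc}$ and $S_a$ (hence $a_b$ via~(\ref{eq:constraint_a_N})) are prescribed data rather than unknowns, so $D^aD_i\bar S_{ab}{}^{i}$ and $D^aD_{[a}a_{b]}$ contribute no second derivatives of $S_{ai}{}^{i}$.
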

\begin{proof}
The principal operator of the system~(\ref{eq_hyperbolic_system:vanishing_Sab0_torsion_components})
can be written as a triangular matrix with diagonal elements either
$\square_{h}$ or $\mathcal{L}_{0}$, where we have omitted multiplicative
$\mathsf{N}$ factors. Then, choosing the value 2 for the Leray-Volevic
weights for all the unknowns, and the following weights for the equations
\begin{equation}
\begin{aligned}\begin{aligned}n[\square_{h}K_{\left(ab\right)}] & =n[\square_{h}S_{bi}{}^{i}]=0\,, & n[\mathcal{L}_{0}h_{ab}] & =1\,,\end{aligned}
\end{aligned}
\end{equation}
equations~(\ref{eq_hyperbolic_system:vanishing_Sab0_torsion_components})
form a quasilinear system. The characteristic determinant for each
pair $\left(ab\right)$ is $-\mathsf{N}\left(-\mathsf{N}^{-2}\xi_{0}^{2}+h^{ij}\xi_{i}\xi_{j}\right)^{2}\xi_{0}$.
The rest of the proof follows as in Proposition \ref{theorem:hyperbolic_system_general_torsion}. 
\end{proof}
%
\subsection{Case with $S_{ai}{}^{i}$ identically zero}
\label{5.3}
In this case, Eq.~(\ref{eq:propagation_G0i_antisymmetric})
becomes a constraint equation:
\begin{lem}
If $S_{ai}{}^{i}=0$, the field equation~(\ref{eq:propagation_G0i_antisymmetric})
reads
\begin{equation}
\begin{aligned}\frac{1}{\mathsf{N}}\left( D_{i}\left(\mathsf{N}S_{0b}{}^{i}\right)-D_{b}\left(\mathsf{N}S_{0i}{}^{i}\right)+S^{i}{}_{b0}D_{i}\mathsf{N}\right)
+\bar{S}_{bij}K^{ji}-S^{i}(2S_{bi0}+K_{bi})+S_{b}K & =8\pi\mathcal{T}_{\left[0b\right]}
\end{aligned}
\label{eq_hyperbolic_system:trace_Saii_zero_constraint}
\end{equation}
\end{lem}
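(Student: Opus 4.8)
The plan is to read off the stated identity by specializing the evolution equation~(\ref{eq:propagation_G0i_antisymmetric}) to the hypothesis $S_{ai}{}^{i}\equiv 0$ together with the spacelike normalization $\varepsilon=-1$ used throughout the Cauchy problem.

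The first and conceptually essential step is to note that $S_{ai}{}^{i}$ is assumed to vanish identically in the foliated region, not merely on a single slice; hence its flow derivative $\mathcal{L}_{0}S_{bi}{}^{i}$ also vanishes, and the leading term $\tfrac{1}{\mathsf{N}}\mathcal{L}_{0}S_{bi}{}^{i}$ of~(\ref{eq:propagation_G0i_antisymmetric}) drops out. This is precisely what changes the character of the equation: with its only $\mathcal{L}_{0}$-derivative term removed, (\ref{eq:propagation_G0i_antisymmetric}) ceases to propagate $S_{bi}{}^{i}$ and instead becomes a constraint on each slice, which is the point of the lemma.

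It then remains to tidy up the surviving terms. The spatial-derivative block $\tfrac{1}{\mathsf{N}}\bigl(D_{i}(\mathsf{N}S_{0b}{}^{i})-D_{b}(\mathsf{N}S_{0i}{}^{i})+S^{i}{}_{b0}D_{i}\mathsf{N}\bigr)$ carries over unchanged. Setting $\varepsilon=-1$ turns the algebraic torsion terms $2\varepsilon S_{bi0}S^{i}+\varepsilon S^{i}K_{bi}-\varepsilon S_{b}K$ into $-S^{i}\bigl(2S_{bi0}+K_{bi}\bigr)+S_{b}K$, reproducing the grouping on the left-hand side of the claimed identity. The only genuine verification is the replacement of $S_{bij}K^{ji}$ by $\bar{S}_{bij}K^{ji}$: from the definition $\bar{S}_{abc}=S_{abc}+\tfrac{2}{N-2}h_{c[a}S_{b]i}{}^{i}$, the correction term involves $S_{b]i}{}^{i}$ and therefore vanishes identically under the hypothesis, so $\bar{S}_{bij}=S_{bij}$ and the substitution is exact. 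I expect no real obstacle; the substance of the lemma is this bookkeeping, and writing the result with $\bar{S}$ rather than $S$ serves only to express it in the traceless torsion variables used in this subsection, where $S_{ai}{}^{i}$ is no longer an independent unknown.
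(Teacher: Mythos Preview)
Your proof is correct and matches the paper's approach: the paper does not even spell out a proof for this lemma, since it is a direct specialization of~(\ref{eq:propagation_G0i_antisymmetric}) under $S_{ai}{}^{i}\equiv0$ and $\varepsilon=-1$, exactly as you describe. Your observation that the identical vanishing of $S_{ai}{}^{i}$ kills the $\mathcal{L}_{0}$-term and that $\bar{S}_{bij}=S_{bij}$ under the hypothesis is all that is needed.
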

This equation can be seen either as an algebraic constraint between
$D_{a}\mathsf{N}$, $S_{ab0}$ and $\mathcal{T}_{\left[0b\right]}$
or a differential constraint for the remaining torsion components,
depending, respectively, on whether those components are known postulated
functions of the spacetime or explicitly related to the matter fields. Without a particular matter
model, the latter case can not be further developed. In the former
case, that is, if the torsion components $S_{0ab}$, $S_{a}$ and
the traceless part of the induced torsion, $\bar{S}_{abc}$, are known
functions of the spacetime, Eq.~(\ref{eq_hyperbolic_system:trace_Saii_zero_constraint})
connects the choice of the lapse function, $\mathsf{N}$, with the
canonical stress-energy tensor and the $S_{ab0}$ components. Namely,
the choice of $\mathsf{N}$ is arbitrary and is related with the choice
of frame to describe dynamics of the matter fluid. Equation~(\ref{eq_hyperbolic_system:trace_Saii_zero_constraint})
then asserts that this choice must be related to the difference between
the energy-momentum fluxes $\mathcal{T}_{\alpha0}$ and $\mathcal{T}_{0\alpha}$,
perceived by the observer in that frame. Therefore, this equation
is not a direct constraint on $S_{ab0}$, but a relation between
the description of the matter fluid and the dynamics of the observer's
frame. Then, we are left with the following system for the unknowns:
\begin{prop}
\label{prop3}
In the algebraic gauge (\ref{eq_Wave:Algebraic_gauge}), let the densitized lapse $\alpha$, the shift
vector $\beta$, the torsion components $S_{a}$ and $S_{0ab}$, the
traceless part of the induced torsion $S_{abc}$, and the matter source
canonical stress-energy tensor $\mathcal{T}_{\alpha\beta}$ be in the
Gevrey class of index 2. For an arbitrary cosmological constant $\Lambda$,
and for vanishing $S_{ai}{}^{i}$, the system of equations given by
\begin{equation}
	\begin{aligned}
		&\begin{aligned}-\mathsf{N}\square_{h} & K_{\left(ab\right)}+2D^{j}D_{\left(a\right|}\left(\mathsf{N}S_{0\left|b\right)j}\right)+2D^{j}D_{\left(a\right|}\left(\mathsf{N}S_{\left|b\right)j0}\right)-4D_{\left(a\right.}D_{\left.b\right)}\left(\mathsf{N}S_{0i}{}^{i}\right)+2D_{i}\mathcal{L}_{0}S^{i}{}_{\left(ab\right)}\\
			& +2D^{j}D_{\left(a\right|}\left(\mathsf{N}S_{0j\left|b\right)}\right)-2D^{i}D_{i}\left(\mathsf{N}S_{0\left(ab\right)}\right)-2\mathcal{L}_{0}D_{\left(a\right.}S_{\left.b\right)}-\frac{2}{\mathsf{N}}S_{\left(a\right|}\mathcal{L}_{0}D_{\left|b\right)}\mathsf{N}\\
			& +P_{\left(ab\right)}+Q_{\left(ab\right)}=8\pi\mathcal{L}_{0}\left[\mathcal{T}_{\left(ab\right)}-\frac{h_{ab}}{N-2}\left(\mathcal{T}+\frac{\Lambda}{4\pi}\right)\right]-16\pi D_{\left(a\right|}\left(\mathsf{N}\mathcal{T}_{\left|b\right)0}\right)\,,
		\end{aligned}
		\\
		&\begin{aligned}-\frac{1}{\mathsf{N}}\mathcal{L}_{0}^{2}S_{ab0}+4S_{ab}{}^{j}D_{j}\left(\mathsf{N}S_{0i}{}^{i}\right) & -2\mathcal{L}_{0}D_{\left[a\right.}S_{\left.b\right]}+D_{i}\mathcal{L}_{0}S_{ab}{}^{i}\\
			& +\frac{2}{\mathsf{N}}S_{\left[a\right|}\mathcal{L}_{0}D_{\left|b\right]}\mathsf{N}+P_{\left[ab\right]}+Q_{\left[ab\right]}=8\pi\mathcal{L}_{0}\mathcal{T}_{\left[ab\right]}\,,
		\end{aligned}
		\\
		&\begin{aligned}\mathcal{L}_{0}h_{ab} & =2\mathsf{N}K_{\left(ab\right)}+4\mathsf{N}S_{0\left(ab\right)}\,,
		\end{aligned}
		\label{eq_hyperbolic_system:vanishing_Saii_torsion_components}
	\end{aligned}
\end{equation}
form a non-strictly hyperbolic system for the variables $h_{ab}$, $K_{\left(ab\right)}$,
$S_{ab0}$. If the Cauchy data in a slice $\mathcal{V}$
is in the Gevrey class of index 2, the Cauchy problem has a unique
solution in a neighborhood of $\mathcal{V}$, with domain
of dependence determined by the isotropic cone of the metric $g$.
\end{prop}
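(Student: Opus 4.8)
The plan is to reproduce, mutatis mutandis, the argument used for Proposition~\ref{theorem:hyperbolic_system_general_torsion}, now specialised to $S_{ai}{}^{i}\equiv 0$. First I would assemble the system~\eqref{eq_hyperbolic_system:vanishing_Saii_torsion_components} from its geometric ingredients evaluated at $\varepsilon=-1$: the wave equation~\eqref{eq_Wave:Wave_eq_densitized_symmetric} for the symmetric part $K_{\left(ab\right)}$, the wave equation~\eqref{eq_Wave:Wave_eq_densitized_antisymmetric} for the antisymmetric part (equivalently for $S_{ab0}$, since $K_{\left[ab\right]}=-S_{ab0}$ by~\eqref{eq:constraint_K_S_relation}), and the metric evolution~\eqref{eq:evolution_h}, with the Ricci terms replaced using the field equations~\eqref{eq:efe1}. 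Because $S_{ai}{}^{i}$ vanishes, \eqref{eq:propagation_G0i_antisymmetric} is demoted to the constraint~\eqref{eq_hyperbolic_system:trace_Saii_zero_constraint} and drops out of the evolution system, so the dynamical unknowns reduce to $\left\{h_{ab},K_{\left(ab\right)},S_{ab0}\right\}$, with $\alpha$, $\beta$, $S_{a}$, $S_{0ab}$ and the traceless part $\bar{S}_{abc}$ treated as prescribed Gevrey-class data.

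Next I would identify the principal operator. After splitting each pair $\left(ab\right)$ into its symmetric and antisymmetric components, the principal part can be arranged as a triangular matrix whose diagonal entries are $\square_{h}$ on $K_{\left(ab\right)}$, $\mathcal{L}_{0}^{2}$ on $S_{ab0}$, and $\mathcal{L}_{0}$ on $h_{ab}$, up to multiplicative $\mathsf{N}$ factors. Assigning Leray-Volevic weight $2$ to every unknown and equation weights
\[
n[\square_{h}K_{\left(ab\right)}]=n[\mathcal{L}_{0}^{2}S_{ab0}]=0,\qquad n[\mathcal{L}_{0}h_{ab}]=1,
\]
one verifies that each term has order at most $m_{j}-n_{i}$, with equality only on the diagonal, so the top-order coefficients depend on the unknowns but not on their highest derivatives; the system is therefore quasilinear and Cauchy regular.

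I would then read off the characteristic determinant for each pair $\left(ab\right)$ as the product of the diagonal symbols, using that the symbol of $\square_{h}$ is $-\mathsf{N}^{-2}\xi_{0}^{2}+h^{ij}\xi_{i}\xi_{j}$ and that each $\mathcal{L}_{0}$ contributes a factor $\xi_{0}$. For $a\neq b$, where $S_{ab0}$ is present, this gives a determinant proportional to $\left(-\mathsf{N}^{-2}\xi_{0}^{2}+h^{ij}\xi_{i}\xi_{j}\right)\left(\xi_{0}\right)^{3}$, whereas for $a=b$, where the antisymmetric component vanishes, it reduces to $\left(-\mathsf{N}^{-2}\xi_{0}^{2}+h^{ij}\xi_{i}\xi_{j}\right)\xi_{0}$. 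In both cases the characteristic polynomial carries a repeated factor, so the system fails to be strictly hyperbolic but is non-strictly hyperbolic in the Leray-Ohya sense. The Leray-Ohya existence and uniqueness theorem for such systems with data in the Gevrey class of index $2$ then delivers a unique local solution; and since the characteristics are the isotropic cone of $g$ together with the normal $\xi_{0}=0$ to the slice, and the normal is interior to the cone, the domain of dependence is the light cone. This closes the argument exactly as in Propositions~\ref{theorem:hyperbolic_system_general_torsion} and~\ref{prop2}.

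The step I expect to require the most care is the triangularisation of the principal symbol. One must check that, at principal order, the couplings among the unknowns run in a single direction: the second-order operator $2D^{j}D_{\left(a\right|}\left(\mathsf{N}S_{\left|b\right)j0}\right)$ feeds $S_{ab0}$ into the $K_{\left(ab\right)}$ equation, and the intrinsic Ricci terms carried by $P_{\left(ab\right)}$ and $P_{\left[ab\right]}$ feed $h_{ab}$ into the $K_{\left(ab\right)}$ and $S_{ab0}$ equations, so that ordering the unknowns as $\left(h_{ab},S_{ab0},K_{\left(ab\right)}\right)$ places all these principal couplings below the diagonal. The reverse couplings---of $K_{\left(ab\right)}$ into the metric and $S_{ab0}$ equations---enter only algebraically and hence fall below the Leray-Volevic threshold, dropping out of the principal symbol. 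This is a bookkeeping task rather than an analytic obstruction, after which the analytic content is supplied verbatim by the Leray-Ohya theory.
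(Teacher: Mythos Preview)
Your proposal is correct and follows essentially the same argument as the paper: triangular principal operator with diagonal entries $\square_{h}$, $\mathcal{L}_{0}^{2}$, $\mathcal{L}_{0}$, the same Leray--Volevic weights, and the Leray--Ohya theorem for Gevrey-class data. Your extra care in distinguishing the $a\neq b$ and $a=b$ cases for the characteristic determinant, and in spelling out the one-directional principal couplings that guarantee triangularity, goes slightly beyond the paper's brief proof (which records only the $a\neq b$ determinant $\left(-\mathsf{N}^{-2}\xi_{0}^{2}+h^{ij}\xi_{i}\xi_{j}\right)\left(\xi_{0}\right)^{3}$), but the substance is identical.
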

\begin{proof}
The principal operator of the system~(\ref{eq_hyperbolic_system:vanishing_Saii_torsion_components})
can be written as a triangular matrix with diagonal elements either
$\square_{h}$, $\mathcal{L}_{0}^{2}$ or $\mathcal{L}_{0}$, where
we have omitted multiplicative $\mathsf{N}$ factors. Then, choosing
the value 2 for the Leray-Volevic weights for all the unknowns, and
the following weights for the equations
\begin{equation}
\begin{aligned}\begin{aligned}n[\square_{h}K_{\left(ab\right)}] & =n[\mathcal{L}_{0}^{2}S_{ab0}]=0\,, & n[\mathcal{L}_{0}h_{ab}] & =1\,,\end{aligned}
\end{aligned}
\end{equation}
the equations form a quasilinear system of differential equations.
The characteristic determinant for each pair $\left(ab\right)$ is
$\left(-\mathsf{N}^{-2}\xi_{0}^{2}+h^{ij}\xi_{i}\xi_{j}\right)\left(\xi_{0}\right)^{3}$.
The rest of the proof follows as in Proposition \ref{theorem:hyperbolic_system_general_torsion}. 
\end{proof}
\subsection{Geometric well-posedness for the Einstein-Cartan system}
\label{final-result}

The Einstein-Cartan equations, as the Einstein equations, were composed into constraint and evolution equations. In section \ref{sec:Propagation_contraints} we have proved that if the constraints are satisfied initially on a Cauchy hypersurface then they are satisfied in a future open neighborhood (see Proposition \ref{theorem:propagation_constraints_general}). Regarding the evolution equations, we have split the problem in the three cases of the previous subsections \ref{5.1}, \ref{5.2} and \ref{5.3}, which we were able to reduce to non-strictly hyperbolic systems of Leray-Ohya type to assert local well-posedness of the respective initial value problems. Gathering those previous results 
yields the following:
\begin{thm}
\label{theorem1}
Consider initial data given on Cauchy hypersurface satisfying the constraint equations~(\ref{eq:constraint_G00})--(\ref{eq:constraint_Gi0}), a stress-energy tensor
$\mathcal{T}_{\alpha\beta}$ verifying the conservation laws~(\ref{eq:conservation_law})
and a hypermomentum tensor $\Delta_{\alpha\beta\gamma}$ verifying
(\ref{eq:constraint_S_abc})--(\ref{eq:constraint_S_a00}). Under
the conditions either of Proposition~\ref{theorem:hyperbolic_system_general_torsion} or Proposition~\ref{prop2} or Proposition \ref{prop3},
the unique local solution $\left(h_{ab},K_{\left(ab\right)},S_{ab0},S_{ai}{}^{i}\right)$ of the respective quasilinear system  corresponds to a unique local solution of the Einstein-Cartan
field equations~(\ref{eq:efe1}) and (\ref{eq:efe2}), with domain of dependence determined by the light cone.
\end{thm}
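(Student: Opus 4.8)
The plan is to treat Theorem~\ref{theorem1} as a synthesis result: the reduced hyperbolic systems of Subsections~\ref{5.1}--\ref{5.3} are derived \emph{from} the Einstein-Cartan equations in the algebraic gauge, and the task is to run that implication in reverse, showing that a solution of the reduced system reconstructs a solution of the full system~(\ref{eq:efe1})--(\ref{eq:efe2}). First I would fix the case according to which of $S_{ab0}$, $S_{ai}{}^{i}$ is non-vanishing and apply the corresponding Proposition~\ref{theorem:hyperbolic_system_general_torsion}, \ref{prop2}, or~\ref{prop3}, obtaining the unique local solution $\left(h_{ab},K_{\left(ab\right)},S_{ab0},S_{ai}{}^{i}\right)$ in the Gevrey class of index $2$ on a neighborhood of $\mathcal{V}$. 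Together with the algebraic gauge $\mathsf{N}=\left|h\right|^{1/2}\alpha$ of~(\ref{eq_Wave:Algebraic_gauge}), the prescribed shift $\beta$, and the remaining torsion components $S_{a}$, $S_{0ab}$, $\bar{S}_{abc}$ (given \emph{ab initio} or through the matter model), this fixes the lapse, the spacetime metric $g$ via the $n+1$ reconstruction $t^{\alpha}=\mathsf{N}n^{\alpha}+\beta^{i}e_{i}^{\alpha}$, and hence the full connection $C_{\alpha\beta}^{\gamma}$ through~(\ref{eq:Connection}).

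The central step is to recover the \emph{undifferentiated} field equations from their differentiated counterparts. In each case the Leray--Ohya wave equations were produced by applying $\mathcal{L}_{0}$ to the first-order equations~(\ref{eq:propagation_Gij}) and~(\ref{eq:propagation_G0i_antisymmetric}) (in Subsections~\ref{5.2} and~\ref{5.3} one of these instead supplies an algebraic or elliptic constraint, as established in the respective Lemmas). I would prescribe the first-order Cauchy datum $\mathcal{L}_{0}K_{\left(ab\right)}\big|_{\mathcal{V}}$, and the analogous torsion datum, so that~(\ref{eq:propagation_Gij}) and~(\ref{eq:propagation_G0i_antisymmetric}) hold on $\mathcal{V}$; this is consistent with the constraints~(\ref{eq:constraint_G00})--(\ref{eq:constraint_Gi0}) assumed for the initial data. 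Because each wave equation equals the $\mathcal{L}_{0}$-derivative of the corresponding first-order equation modulo terms linear and homogeneous in the evolution quantities, a uniqueness argument for the resulting homogeneous transport system along the $t$-flow forces the original first-order equations to hold throughout the neighborhood. The metric evolution~(\ref{eq:evolution_h}) and the gauge condition~(\ref{gauge}) are imposed directly and hold by construction.

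With the evolution equations recovered, I would invoke Proposition~\ref{theorem:propagation_constraints_general}. Because the conservation laws~(\ref{eq:conservation_law}) are assumed and the constraints~(\ref{eq:constraint_G00})--(\ref{eq:constraint_Gi0}) hold on $\mathcal{V}$, the quantities $\Sigma_{00}$ and $\Sigma_{a0}$ vanish on $\mathcal{V}$ and, by the symmetrizable first-order system of that Proposition (with $\varepsilon=-1$), they vanish on the whole neighborhood. Combining the propagated Hamiltonian and momentum constraints with the recovered equation $F_{ab}=0$ and the antisymmetric equation~(\ref{eq:propagation_G0i_antisymmetric}) yields every component of the Einstein sector~(\ref{eq:efe1}). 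The torsion field equation~(\ref{eq:efe2}) is equivalent to the algebraic relations~(\ref{eq:constraint_S_abc})--(\ref{eq:constraint_S_a00}) between the torsion and the prescribed hypermomentum; these hold by hypothesis, while $S_{ab0}$ and $S_{ai}{}^{i}$ are fixed by the reduced system, so~(\ref{eq:efe2}) is satisfied.

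Uniqueness of the reconstructed Einstein-Cartan solution then follows from the uniqueness granted by the Propositions, two such solutions differing only by the residual gauge freedom in $\alpha$ and $\beta$. Future causality and the domain of dependence are inherited from the characteristic analysis in each Proposition: the characteristic determinant factorizes into powers of $\left(-\mathsf{N}^{-2}\xi_{0}^{2}+h^{ij}\xi_{i}\xi_{j}\right)$ and $\xi_{0}$, so the characteristics are the null cone of $g$ and the slice normal, the latter interior to the cone, giving domain of dependence equal to the light cone. The main obstacle is the reduction step of the second paragraph: one must verify that raising the differential order to obtain the Leray--Ohya wave equations introduces no spurious solutions, i.e. that the differentiated equations together with the correctly chosen first-order Cauchy data are genuinely equivalent to the first-order Einstein-Cartan evolution equations. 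This is delicate precisely because the torsion couples the symmetric and antisymmetric sectors, so the homogeneous propagation system for the evolution quantities must itself be shown to be well-posed in the Gevrey class of index $2$.
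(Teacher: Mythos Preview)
Your proposal is correct and, in fact, considerably more thorough than what the paper offers. The paper does not give a formal proof of Theorem~\ref{theorem1}: it simply states, in the paragraph immediately preceding the theorem, that ``gathering those previous results yields the following'', referring back to Proposition~\ref{theorem:propagation_constraints_general} for constraint propagation and to Propositions~\ref{theorem:hyperbolic_system_general_torsion}--\ref{prop3} for the well-posedness of the reduced systems. No argument is supplied for the converse implication---that a solution of the reduced hyperbolic system actually yields a solution of the full Einstein-Cartan system~(\ref{eq:efe1})--(\ref{eq:efe2}).

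You have correctly isolated precisely the step the paper leaves implicit: the wave equations~(\ref{eq_Wave:Wave_eq_densitized_symmetric})--(\ref{eq_Wave:Wave_eq_densitized_antisymmetric}) are obtained by applying $\mathcal{L}_{0}$ to~(\ref{eq:propagation_Gij}) and combining with~(\ref{eq:constraint_Gi0}), so one must show that this differentiation introduces no spurious solutions. Your strategy---fixing the first-order data so that $F_{ab}=0$ and $\Sigma_{[0b]}=0$ on $\mathcal{V}$, then propagating these via a homogeneous transport system, and finally invoking Proposition~\ref{theorem:propagation_constraints_general} for $\Sigma_{00}$ and $\Sigma_{a0}$---is exactly the Choquet-Bruhat--York reduction argument adapted to the present setting, and it is what a complete proof requires. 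The paper presumably intends the reader to import this mechanism from the torsion-free case, but you are right to flag that the torsion-induced coupling between the symmetric and antisymmetric sectors makes the homogeneous propagation system for the evolution residuals nontrivial and deserving of explicit verification in the Gevrey class.
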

We assumed that the stress-energy tensor was given {\em ab initio} and did not consider particular matter contents. When specifying the matter, it is possible that the system can be reduced to a Leray hyperbolic form~\cite{Leray_lecture_notes_1953} or even to a symmetric hyperbolic form, but that remains to be seen. Since both vacuum and minimally coupled scalar field Lagrangians are not compatible with a non-trivial torsion in the Einstein-Cartan theory, a naturally interesting matter field to investigate is electromagnetism and work in this direction is under progress.
%
\section*{Acknowlegments}

PL acknowledges partial financial support provided under the European Union\textquoteright s
H2020 ERC Advanced Grant \textquotedblleft Black holes: gravitational
engines of discovery\textquotedblright{} grant agreement no. Gravitas--101052587.
Views and opinions expressed are, however, those of the author only
and do not necessarily reflect those of the European Union or the
European Research Council. Neither the European Union nor the granting
authority can be held responsible for them.
FCM thanks H2020-MSCA-2022-SE project EinsteinWaves, GA No.101131233; CAMGSD, IST-ID, projects UIDB/04459/2020 and UIDP/04459/2020; CMAT, Univ. Minho, projects UIDB/00013/2020 and UIDP/00013/2020.
\appendix

\section{\label{Appendix:np1_identities} Geometric and $n+1$ identities}

\subsection{Identities for the derivatives of the lapse function and the shift
vector}

For the derivatives of the lapse function and the shift vector, we
find the following identities
\begin{equation}
\begin{aligned}\mathcal{L}_{\beta}\beta^{a} & =0\,,\\
\frac{1}{\mathsf{N}}\mathcal{L}_{0}\beta^{a} & =n^{\mu}\partial_{\mu}\beta^{a}+\beta^{i}\varphi^{a}{}_{i}-\beta^{i}K_{i}{}^{a}-2\beta^{i}S_{0i}{}^{a}\,;
\end{aligned}
\end{equation}
and~
\begin{equation}
\begin{aligned}\mathcal{L}_{\beta}\left(D_{a}\mathsf{N}\right) & =\beta^{i}D_{i}D_{a}\mathsf{N}+\left(D_{i}\mathsf{N}\right)\left(D_{a}\beta^{i}\right)+2\beta^{i}S_{ia}{}^{j}\left(D_{j}\mathsf{N}\right)\,,\\
\frac{1}{\mathsf{N}}\mathcal{L}_{0}\left(D_{a}\mathsf{N}\right) & =n^{\mu}\partial_{\mu}\left(D_{a}\mathsf{N}\right)+\left(D_{i}\mathsf{N}\right)K_{a}{}^{i}-\left(D_{i}\mathsf{N}\right)\varphi^{i}{}_{a}+2S_{0a}{}^{j}\left(D_{j}\mathsf{N}\right)\,.
\end{aligned}
\end{equation}
Moreover, we have the useful commutation relation
\begin{equation}
\begin{aligned}D_{a}\left(\mathcal{L}_{\beta}\mathsf{N}\right) & =\mathcal{L}_{\beta}\left(D_{a}\mathsf{N}\right)\,.\end{aligned}
\end{equation}
and assuming $\mathsf{N}$ to be a $\mathcal{C}^{2}$ function of the spacetime coordinates, we also have
\begin{equation}
D_{a}\left(\mathcal{L}_{0}\mathsf{N}\right)=\mathcal{L}_{0}\left(D_{a}\mathsf{N}\right)\,.
\end{equation}
%
\subsection{Identities for the derivatives of the first and second fundamental
form}

In the body of the text, we have the propagation equation~(\ref{eq:evolution_h}).
For the inverse induced metric, with components $h^{ab}$, we find~
\begin{equation}
\begin{aligned}\mathcal{L}_{\beta}h^{ab} & =-2D^{\left(a\right.}\beta^{\left.b\right)}-4S_{i}{}^{\left(ab\right)}\beta^{i}\,,\\
\mathcal{L}_{0}h^{ab} & =-2\mathsf{N}K^{\left(ab\right)}-4\mathsf{N}S_{0}{}^{\left(ab\right)}\,;
\end{aligned}
\end{equation}
whereas, for the second fundamental form we have~
\begin{equation}
\begin{aligned}\mathcal{L}_{\beta}K_{ab} & =\beta^{i}D_{i}K_{ab}+K_{ib}D_{a}\beta^{i}+K_{ai}D_{b}\beta^{i} +2\beta^{i}S_{ia}{}^{j}K_{jb}+2\beta^{i}S_{ib}{}^{j}K_{aj}\,,\\
\frac{1}{\mathsf{N}}\mathcal{L}_{0}K_{ab} & =n^{\mu}\partial_{\mu}K_{ab}+2S_{0a}{}^{i}K_{ib}+2S_{0b}{}^{i}K_{ai}
  +K_{ai}K^{i}{}_{b}+K_{ai}K_{b}{}^{i}-K_{ai}\varphi^{i}{}_{b}-K_{ib}\varphi^{i}{}_{a}\,,
\end{aligned}
\end{equation}
where $\varphi^{a}{}_{b}$ is defined in Eq.~(\ref{eq:definition_a_varphi}).

\subsection{Identities for the derivatives of the torsion field}

It is useful to find the derivatives associated with the $n+1$ decomposition
of the various components of the torsion tensor.

\paragraph{Derivatives of the $S_{a}$ components}

~
\begin{equation}
\begin{aligned}\mathcal{L}_{\beta}S_{a} & =\beta^{i}D_{i}S_{a}+S_{i}D_{a}\beta^{i}+2S_{ia}{}^{j}\beta^{i}S_{j}\,,\\
\frac{1}{\mathsf{N}}\mathcal{L}_{0}S_{a} & =n^{\mu}\partial_{\mu}S_{a}-S_{i}\varphi^{i}{}_{a}+2S_{0a}{}^{i}S_{i}+S_{i}K_{a}{}^{i}\,.
\end{aligned}
\end{equation}

\paragraph{Derivatives of the $S_{ab0}$ components}

~
\begin{equation}
\begin{aligned}\mathcal{L}_{\beta}S_{ab0} & =\beta^{i}D_{i}S_{ab0}+S_{ib0}D_{a}\beta^{i}+S_{ai0}D_{b}\beta^{i} +2\beta^{i}S_{ia}{}^{j}S_{jb0}+2\beta^{i}S_{ib}{}^{j}S_{aj0}\,,\\
\frac{1}{\mathsf{N}}\mathcal{L}_{0}S_{ab0} & =n^{\mu}\partial_{\mu}S_{ab0}+2S_{0a}{}^{i}S_{ib0}+2S_{0b}{}^{i}S_{ai0} -S_{ai0}\varphi^{i}{}_{b}-S_{ib0}\varphi^{i}{}_{a}+S_{ai0}K_{b}{}^{i}+S_{ib0}K_{a}{}^{i}\,.
\end{aligned}
\end{equation}

\paragraph{Derivatives of the $S_{0ab}$ components}

~
\begin{equation}
\begin{aligned}\mathcal{L}_{\beta}S_{0ab} & =\beta^{i}D_{i}S_{0ab}+S_{0ib}D_{a}\beta^{i}+S_{0ai}D_{b}\beta^{i} +2\beta^{i}S_{ia}{}^{j}S_{0jb}+2\beta^{i}S_{ib}{}^{j}S_{0aj}\,,\\
\frac{1}{\mathsf{N}}\mathcal{L}_{0}S_{0ab} & =n^{\mu}\partial_{\mu}S_{0ab}+2S_{0a}{}^{i}S_{0ib}+2S_{0b}{}^{i}S_{0ai} -S_{0ai}\varphi^{i}{}_{b}-S_{0ib}\varphi^{i}{}_{a}+S_{0ai}K_{b}{}^{i}+S_{0ib}K_{a}{}^{i}\,.
\end{aligned}
\end{equation}

\paragraph{Derivatives of the $S_{abc}$ component}

~~~
\begin{equation}
\begin{aligned}
	\mathcal{L}_{\beta}S_{abc} & =\beta^{i}D_{i}S_{abc}+S_{ibc}D_{a}\beta^{i}+S_{aic}D_{b}\beta^{i}+S_{abi}D_{c}\beta^{i}+2\beta^{i}\left(S_{ia}{}^{j}S_{jbc}+S_{ib}{}^{j}S_{ajc}+S_{ic}{}^{j}S_{abj}\right)\\
\frac{1}{\mathsf{N}}\mathcal{L}_{0}S_{abc} & =n^{\mu}\partial_{\mu}S_{abc}-S_{ibc}\varphi^{i}{}_{a}-S_{aic}\varphi^{i}{}_{b}-S_{abi}\varphi^{i}{}_{c} +2S_{0a}{}^{i}S_{ibc}+2S_{0b}{}^{i}S_{aic}+2S_{0c}{}^{i}S_{abi}\\
 & +S_{ibc}K_{a}{}^{i}+S_{aic}K_{b}{}^{i}+S_{abi}K_{c}{}^{i}\,.
\end{aligned}
\end{equation}

\subsection{Ricci tensor and Ricci scalar}

A direct application of the Gauss-Codazzi-Ricci embedding equations
gives the relation between the intrinsic Ricci
scalar of an embedded hypersurface, $\ps{\left(N-1\right)}R$,
and the Ricci scalar, $R$, of the embedding manifold. From the definition $R_{\alpha\gamma}:=R_{\alpha\mu\gamma}{}^{\mu}$
and properties of the projector tensor, with components $h_{\alpha\beta}$,
we find
\begin{equation}
R_{\alpha\gamma}=h^{bd}e_{b}^{\beta}e_{d}^{\delta}R_{\alpha\beta\gamma\delta}+\varepsilon R_{\alpha\beta\gamma\delta}n^{\beta}n^{\delta}.
\label{Appendix_eq:Ricci_push-forward_Riemann}
\end{equation}
This equation together with the field equations of the Einstein-Cartan theory imply Eq.~(\ref{eq:propagation_Gij}).

It is also useful to relate the full Ricci tensor,
$R_{\alpha\beta}$, and the metric Ricci tensor $\tilde{R}_{\alpha\beta}$, such that
\begin{equation}
R_{\alpha\beta}=\tilde{R}_{\alpha\beta}+\nabla_{\gamma}\mathcal{K}_{\alpha\beta}{}^{\gamma}-\nabla_{\alpha}\mathcal{K}_{\gamma\beta}{}^{\gamma}+2S_{\sigma\alpha}{}^{\gamma}\mathcal{K}_{\gamma\beta}{}^{\sigma}+\mathcal{K}_{\gamma\beta}{}^{\sigma}\mathcal{K}_{\alpha\sigma}{}^{\gamma}-\mathcal{K}_{\gamma\sigma}{}^{\gamma}\mathcal{K}_{\alpha\beta}{}^{\sigma}\,,\label{Appendix_eq:Full_Ricci_metric_Ricci_relation}
\end{equation}
where $\mathcal{K}_{\alpha\beta\gamma}$ represent the components of the controsion
tensor, defined in Eq.~(\ref{eq:Contorsion}). Taking the trace of
Eq.~(\ref{Appendix_eq:Ricci_push-forward_Riemann}), yields
\begin{equation}
R=h^{ac}h^{bd}e_{a}^{\alpha}e_{b}^{\beta}e_{c}^{\gamma}e_{d}^{\delta}R_{\alpha\beta\gamma\delta}+2\varepsilon h^{ac}e_{a}^{\alpha}n^{\beta}e_{c}^{\gamma}n^{\delta}R_{\alpha\beta\gamma\delta}\,.
\end{equation}
Considering the Gauss and Ricci equations, Eqs.~(\ref{eq:Gauss_equation})
and (\ref{eq:Ricci_equation}), respectively, yields
\begin{equation}
\begin{aligned}R=  \ps{\left(N-1\right)}R&-2a_{i}a^{i}+4S_{i}a^{i}-\varepsilon (K^{2}+K_{ij}K^{ji})\\ &+2\varepsilon\left(D_{i}a^{i}-h^{ac}\dot{K}_{ac}+2K_{ij}\varphi^{(ji)}-2S_{0ij}K^{ji}\right)\,,
\end{aligned}
\end{equation}
or
\begin{equation}
\begin{aligned}R= & \ps{\left(N-1\right)}R+2\varepsilon D_{i}a^{i}-2a_{i}a^{i}+4S_{i}a^{i} -\frac{2}{\mathsf{N}}\varepsilon\mathcal{L}_{0}K-4\varepsilon K_{ij}S_{0}{}^{ji}-\varepsilon K_{ij}K^{ji}-\varepsilon K^{2}\,,
\end{aligned}
\end{equation}
where $K:=K_{i}{}^{i}$.
\section{\label{Appendix:conservation_laws} Conservation equations}

In the $n+1$ formalism, the conservation equations \eqref{eq:conservation_law} read
\begin{equation}
\begin{aligned}
\frac{\varepsilon}{\mathsf{N}}\mathcal{L}_{0}\mathcal{T}_{00}&+D_{i}\mathcal{T}_{0}{}^{i}-\frac{\varepsilon}{4\pi\mathsf{N}}S^{0ba}\mathcal{L}_{0}K_{ab}+\frac{1}{4\pi\mathsf{N}}S^{abc}\left(\frac{1}{2}\mathcal{L}_{0}S_{abc}+\mathcal{L}_{0}S_{acb}\right)\\&=  \frac{1}{4\pi}S^{abc}\left\{ \frac{1}{2}\left(D_{a}K_{bc}+D_{a}K_{cb}+D_{c}K_{ba}\right)+D_{a}S_{0bc}+D_{a}S_{0cb}+D_{c}S_{0ba}\right.\\
 & +\frac{1}{\mathsf{N}}\left(S_{0cb}D_{a}\mathsf{N}+S_{a0b}D_{c}\mathsf{N}+S_{ac0}D_{b}\mathsf{N}+S_{0bc}D_{a}\mathsf{N}+\frac{1}{2}S_{ab0}D_{c}\mathsf{N}\right)\\
 & +2S_{i0a}S_{bc}{}^{i}+S_{i0c}S_{ba}{}^{i}+S_{c0i}S_{ba}{}^{i}+2S_{b0}{}^{i}S_{cai}+S_{aci}K_{b}{}^{i}\\
 & \left.+\frac{1}{2}S_{abi}K_{c}{}^{i}+\frac{1}{2}\left(S_{ab}{}^{i}K_{ic}+S_{cb}{}^{i}K_{ia}+S_{ac}{}^{i}K_{ib}\right)+2\varepsilon S_{a}K_{cb}\right\} \\
 & -\frac{1}{4\pi}S^{0ba}\left(\varepsilon K_{ai}K_{b}{}^{i}+\varepsilon D_{a}a_{b}-a_{a}a_{b}+2S_{a}a_{b}+2\varepsilon S_{0b}{}^{i}K_{ai}\right)\\
 & +2S_{0ji}\mathcal{T}^{ij}-2\varepsilon S_{i}\mathcal{T}^{0i}+\mathcal{T}_{ij}K^{ji}-\varepsilon\mathcal{T}_{00}K+\varepsilon\left(\mathcal{T}_{i0}+\mathcal{T}_{0i}\right)a^{i}\\
 & -\frac{1}{4\pi\left(N-2\right)}S_{0i}{}^{i}\left(8\pi\mathcal{T}_{j}{}^{j}+8\pi\varepsilon\mathcal{T}_{00}-2\Lambda\right)
\end{aligned}
\label{eq:propagation_L0_T00}
\end{equation}
\begin{equation}
\begin{aligned}\frac{\varepsilon}{\mathsf{N}}\mathcal{L}_{0}\mathcal{T}_{a0}+&D_{i}\mathcal{T}_{a}{}^{i}-\frac{1}{4\pi\mathsf{N}}S^{i}\mathcal{L}_{0}K_{ai}
-\frac{\varepsilon}{4\pi\mathsf{N}}S^{ij0}\left(\frac{1}{2}\mathcal{L}_{0}S_{ija}-\mathcal{L}_{0}S_{aij}\right)\\
= & -\frac{\varepsilon}{4\pi}S^{jb0}\left\{ D_{j}K_{ab}+2\varepsilon S_{j}K_{ab}-2S_{0ij}S_{ba}{}^{i}-2S_{0b}{}^{i}S_{aji}\right.\\
 & +\frac{1}{\mathsf{N}}\left[2D_{j}\left(\mathsf{N}S_{0\left(ab\right)}\right)+D_{a}\left(\mathsf{N}S_{0bj}\right)+D_{j}\left(\mathsf{N}S_{ab0}\right)+\frac{1}{2}D_{a}\left(\mathsf{N}S_{jb0}\right)\right]\\
 & \left.+S_{jb}{}^{i}K_{\left(ia\right)}+2S_{ja}{}^{i}K_{\left(ib\right)}-2S_{0\left(ia\right)}S_{bj}{}^{i}\right\} \\
 & -\frac{1}{8\pi}S^{cdb}\left[\ps{\left(N-1\right)}{R_{abcd}}-2\varepsilon K_{ac}K_{bd}\right]+\frac{\varepsilon}{2\pi}S^{0cb}\left(D_{\left[a\right.}K_{\left.b\right]c}+S_{ab}{}^{i}K_{ic}\right)\\
 & +\frac{2}{N-2}\left(S_{ai}{}^{i}+\varepsilon S_{a}\right)\left(\frac{\Lambda}{4\pi}-\mathcal{T}_{j}{}^{j}-\varepsilon\mathcal{T}_{00}\right)+2\varepsilon S_{ai0}\mathcal{T}^{0i}+\varepsilon\mathcal{T}_{a}{}^{i}a_{i}\\
 & +\frac{\varepsilon}{\mathsf{N}}\mathcal{T}_{00}D_{a}\mathsf{N}-\varepsilon\left(\mathcal{T}_{0i}K^{i}{}_{a}-\mathcal{T}_{i0}K_{a}{}^{i}+\mathcal{T}_{a0}K\right)+2S_{aij}\mathcal{T}^{ji}\\
 & -\frac{1}{4\pi}S^{b}\left(K_{ai}K_{b}{}^{i}+D_{a}a_{b}-\varepsilon a_{a}a_{b}+2\varepsilon S_{a}a_{b}+2S_{0b}{}^{i}K_{ai}\right).
\end{aligned}
\label{eq:propagation_L0_Ti0}
\end{equation}
\section{\label{sec:Wave-equation-for-K_(ab)} Wave equation for $K_{\left(ab\right)}$ and
the densitized lapse}

From Eqs.~(\ref{eq:Connection}) and (\ref{eq:Christoffel_symbols})
and $\delta g^{\lambda\sigma}=-g^{\mu\lambda}g^{\nu\sigma}\left(\delta g_{\mu\nu}\right)$,
where $\delta g_{\mu\nu}$ represents the component $\mu\nu$ of a
infinitesimal variation of the metric tensor, it is straightforward
to show that, as expected, an infinitesimal variation of the connection
$\delta C_{\alpha\beta}^{\gamma}$, is a tensor field. This, in conjunction
with Eq.~(\ref{eq:Ricci_tensor_connnection}), allows us to derive
the generalized Palatini identity for a general metric compatible
connection:
\begin{equation}
\delta R_{\alpha\beta}=\nabla_{\mu}\left(\delta C_{\alpha\beta}^{\mu}\right)-\nabla_{\alpha}\left(\delta C_{\mu\beta}^{\mu}\right)+2S_{\nu\alpha}{}^{\mu}\left(\delta C_{\mu\beta}^{\nu}\right)\,,
\end{equation}
relating an infinitesimal variation of the Ricci tensor components
with an infinitesimal variation of the connection coefficients. Using
the generalized Palatini identity and finding an expression for an
infinitesimal variation of the connection coefficients in terms of
a variation of the metric $g$ and the variation of the torsion $S$,
we find the useful formula
\begin{equation}
\begin{aligned}\delta R_{\alpha\beta} & =\frac{1}{2}\left[g^{\gamma\sigma}\nabla_{\gamma}\left(\nabla_{\alpha}\delta g_{\sigma\beta}+\nabla_{\beta}\delta g_{\alpha\sigma}\right)-\nabla_{\sigma}\nabla^{\sigma}\delta g_{\alpha\beta}-\nabla_{\alpha}\nabla_{\beta}\left(g^{\gamma\sigma}\delta g_{\gamma\sigma}\right)\right]\\
 & +\nabla_{\gamma}\left[2S_{\left(\alpha\right|}{}^{\gamma\mu}\delta g_{\left|\beta\right)\mu}+2g^{\gamma\sigma}S^{\mu}{}_{\left(\alpha\beta\right)}\delta g_{\mu\sigma}+\delta \mathcal{K}_{\alpha\beta}{}^{\gamma}\right]-2\nabla_{\alpha}\left(\delta S_{\mu\beta}{}^{\mu}\right)\\
 & +2S_{\gamma\alpha}{}^{\mu}\left[\frac{1}{2}g^{\gamma\sigma}\left(\nabla_{\mu}\delta g_{\sigma\beta}+\nabla_{\beta}\delta g_{\mu\sigma}-\nabla_{\sigma}\delta g_{\mu\beta}\right)+S_{\mu}{}^{\gamma\nu}\delta g_{\nu\beta}\right.\\
 & +S_{\beta}{}^{\gamma\nu}\delta g_{\mu\nu}+2g^{\gamma\sigma}S^{\nu}{}_{\left(\mu\beta\right)}\delta g_{\nu\sigma}+\delta \mathcal{K}_{\mu\beta}{}^{\gamma}\Bigr]\,,
\end{aligned}
\label{eq_Wave:Variation_Ricci_metric_torsion}
\end{equation}
where the infinitesimal variation of the contorsion tensor, $\delta \mathcal{K}_{\alpha\beta}{}^{\gamma}$,
is given in terms of an infinitesimal variation of the torsion tensor
by 
\begin{equation}
\delta \mathcal{K}_{\alpha\beta}{}^{\gamma}=\delta S_{\alpha\beta}{}^{\gamma}+\delta S^{\gamma}{}_{\alpha\beta}-\delta S_{\beta}{}^{\gamma}{}_{\alpha}.
\end{equation}
Equation~(\ref{eq_Wave:Variation_Ricci_metric_torsion}) together
with Eq.~(\ref{eq:evolution_h}) allow us to find the following expression
for the Lie derivative of the intrinsic Ricci tensor $\ps{\left(N-1\right)}{R_{ab}}$
along the vector field $t-\beta$ as
\begin{equation}
\begin{aligned}\mathcal{L}_{0}\ps{\left(N-1\right)}{R_{ab}} & =D^{i}D_{a}\left[\mathsf{N}K_{\left(ib\right)}+2\mathsf{N}S_{0\left(ib\right)}\right]+D^{i}D_{b}\left[\mathsf{N}K_{\left(ia\right)}+2\mathsf{N}S_{0\left(ia\right)}\right]\\
 & -D^{i}D_{i}\left[\mathsf{N}K_{\left(ab\right)}+2\mathsf{N}S_{0\left(ab\right)}\right]-D_{a}D_{b}\left[\mathsf{N}K+2\mathsf{N}S_{0i}{}^{i}\right]\\
 & +D_{i}\left(\mathcal{L}_{0}K_{ab}{}^{i}\right)-2D_{a}\left(\mathcal{L}_{0}S_{ib}{}^{i}\right)+M_{ab}\,,
\end{aligned}
\label{eq_Wave:L0_R_ab_intrinsic}
\end{equation}
where
\begin{equation}
\begin{aligned}M_{ab}= & 2S_{a}{}^{kj}\left[D_{k}\left(\mathsf{N}K_{\left(jb\right)}+2\mathsf{N}S_{0\left(jb\right)}\right)-D_{j}\left(\mathsf{N}K_{\left(kb\right)}+2\mathsf{N}S_{0\left(kb\right)}\right)-D_{b}\left(\mathsf{N}K_{\left(jk\right)}+2\mathsf{N}S_{0\left(jk\right)}\right)\right]\\
 & +4S_{ia}{}^{j}\left[S_{j}{}^{ik}\left(\mathsf{N}K_{\left(kb\right)}+2\mathsf{N}S_{0\left(kb\right)}\right)+S_{b}{}^{ik}\left(\mathsf{N}K_{\left(kj\right)}+2\mathsf{N}S_{0\left(kj\right)}\right)\right]+2S_{ia}{}^{j}\left(\mathcal{L}_{0}K_{jb}{}^{i}\right)\\
 & +4D^{j}\left[S^{k}{}_{\left(ab\right)}\left(\mathsf{N}K_{\left(kj\right)}+2\mathsf{N}S_{0\left(kj\right)}\right)\right]-8S_{a}{}^{kj}S^{l}{}_{\left(jb\right)}\left(\mathsf{N}K_{\left(lk\right)}+2\mathsf{N}S_{0\left(lk\right)}\right)\\
 & +2D_{i}\left[S_{a}{}^{ij}\left(\mathsf{N}K_{\left(jb\right)}+2\mathsf{N}S_{0\left(jb\right)}\right)+S_{b}{}^{ij}\left(\mathsf{N}K_{\left(ja\right)}+2\mathsf{N}S_{0\left(ja\right)}\right)\right]
\end{aligned}
\label{eq_Wave:M_tensor_def}
\end{equation}
contains only terms of at most first-order derivatives of $K_{ab}$,
$S_{abc}$, $S_{0ab}$ and $\mathsf{N}$.

Lastly, we will also need to compute the relation between the quantities
$\mathcal{L}_{0}\left(D_{a}\partial_{b}\mathsf{N}\right)$ and $D_{a}\partial_{b}\left(\mathcal{L}_{0}\mathsf{N}\right)$.
For this, it is useful to rewrite the second Codazzi equation, Eq.~\eqref{eq:Second_Codazzi_equation}, explicitly
in terms of the lapse function, the second fundamental form, and the
induced torsion tensor, such that
\begin{equation}
\begin{aligned}\frac{1}{2}e_{a}^{\alpha}e_{b}^{\beta}e_{c}^{\gamma}n^{\delta}\left(R_{\gamma\delta\alpha\beta}-R_{\alpha\beta\gamma\delta}\right) & =-\frac{1}{2N}\mathcal{L}_{0}S_{abc}+\frac{1}{N}\mathcal{L}_{0}S_{c\left[ab\right]}+\frac{1}{N}D_{\left[a\right|}\left(NS_{0\left|b\right]c}\right)+\frac{1}{N}D_{\left[a\right|}\left(NS_{0c\left|b\right]}\right)\\
 & +\frac{1}{N}D_{c}\left(NS_{0\left[ba\right]}\right)-\frac{3}{2}D_{\left[a\right.}K_{\left.bc\right]}+\frac{1}{2N}\left(S_{ab0}D_{c}N-2S_{c\left[a\right|0}D_{\left|b\right]}N\right)\\
 & -2S_{0i\left[a\right.}S_{\left.b\right]c}{}^{i}-2S_{c\left[a\right|i}S_{0\left|b\right]}{}^{i}-2S_{ba}{}^{i}S_{0\left(ci\right)}-\frac{3}{2}S_{\left[ab\right|}{}^{i}K_{i\left|c\right]}\\
 & -\frac{1}{2}S_{bai}K_{c}{}^{i}-S_{c\left[a\right|i}K_{\left|b\right]}{}^{i}+2\varepsilon S_{\left[a\right|}K_{c\left|b\right]}\,.
\end{aligned}
\end{equation}
Then, in conjunction with Eqs.~(\ref{eq:Riemann_tensor_definition})
and (\ref{eq:first_Bianchi_identity}), after a lengthy and laborious
calculation, we find
\begin{equation}
\begin{aligned}\mathcal{L}_{0}\left(D_{a}\partial_{b}\mathsf{N}\right) & =\frac{1}{2}\left(\partial^{c}\mathsf{N}\right)\left[4\mathcal{L}_{0}S_{a\left[cb\right]}-2\mathcal{L}_{0}S_{cba}-D_{a}\left(\mathcal{L}_{0}h_{bc}\right)-D_{b}\left(\mathcal{L}_{0}h_{ac}\right)+D_{c}\left(\mathcal{L}_{0}h_{ab}\right)\right]\\
 & +2N\left(\partial^{c}\mathsf{N}\right)\left[S_{ca}{}^{i}K_{\left(ib\right)}+S_{cb}{}^{i}K_{\left(ai\right)}+S_{ab}{}^{i}K_{\left(ic\right)}+2S_{ca}{}^{i}S_{0\left(bi\right)}\right.\\
 & \left.+2S_{ab}{}^{i}S_{0\left(ci\right)}+2S_{cb}{}^{i}S_{0\left(ai\right)}\right]+D_{a}\partial_{b}\left(\mathcal{L}_{0}\mathsf{N}\right)\,.
\end{aligned}
\end{equation}

We are now in a position to find a wave equation for the second fundamental
form, or more precisely, for its symmetric part, $K_{\left(ab\right)}$.
In order to simplify the notation, we will introduce $R_{ab}:=e_{a}^{\alpha}e_{b}^{\beta}R_{\alpha\beta}$,
for the pull-back of the Ricci tensor in $\mathcal{M}$ (to not be
confused with the intrinsic Ricci tensor components $\ps{\left(N-1\right)}{R_{ab}}$).
Gathering the previous results in this subsection, and considering
Eqs~(\ref{eq:constraint_Gi0}), (\ref{eq:propagation_Gij}), (\ref{eq:evolution_h})
and (\ref{eq:constraint_a_N}) we find
\begin{equation}
\begin{aligned}\mathcal{L}_{0}R_{ab}-2D_{\left(a\right|}\left(\mathsf{N}R_{\left|b\right)0}\right) & =-\mathsf{N}\square_{h}K_{\left(ab\right)}+\frac{1}{\mathsf{N}}\mathcal{L}_{0}^{2}K_{\left[ab\right]}-\frac{1}{\mathsf{N}}D_{a}D_{b}\left[\mathcal{L}_{0}\mathsf{N}-\mathsf{N}^{2}K-2\mathsf{N}^{2}S_{0i}{}^{i}\right]\\
 & +2D^{j}D_{\left(a\right|}\left(\mathsf{N}S_{0\left|b\right)j}\right)+2D^{j}D_{\left(a\right|}\left(\mathsf{N}S_{\left|b\right)j0}\right)-2D^{i}D_{i}\left(\mathsf{N}S_{0\left(ab\right)}\right)\\
 & -4D_{a}D_{b}\left(\mathsf{N}S_{0i}{}^{i}\right)+2D^{j}D_{\left(a\right|}\left(\mathsf{N}S_{0j\left|b\right)}\right)-2\mathcal{L}_{0}D_{a}S_{b}\\
 & -\frac{2}{\mathsf{N}}S_{b}\mathcal{L}_{0}D_{a}\mathsf{N}+D_{i}\mathcal{L}_{0}K_{ab}{}^{i}-2D_{a}\mathcal{L}_{0}S_{ib}{}^{i}\\
 & +\frac{1}{\mathsf{N}^{2}}\left[\mathcal{L}_{0}\mathsf{N}-\mathsf{N}^{2}K-2\mathsf{N}^{2}S_{0i}{}^{i}\right]D_{a}D_{b}\mathsf{N} +P_{ab}+Q_{ab}\,,
\end{aligned}
\label{eq_Wave:Wave_eq_general}
\end{equation}
where we introduce the hyperbolic operator $\square_{h}:=-\frac{1}{\mathsf{N}^{2}}\mathcal{L}_{0}^{2}+\tilde{D}_{i}\tilde{D}^{i}$ and
$\tilde{D}$ represents the Levi-Civita connection associated with
$h_{ab}$, and the tensors
\begin{equation}
\begin{aligned}
	\label{P_tensor}
	P_{ab}= & -K_{\left(ab\right)}\left(D_{i}D^{i}\mathsf{N}\right)-2\left(D^{i}\mathsf{N}\right)\left(D_{i}K_{\left(ab\right)}\right)+2D_{\left(a\right|}\left[K_{\left|b\right)}{}^{i}D_{i}\mathsf{N}-KD_{\left|b\right)}\mathsf{N}\right]\\
 & -\frac{2}{\mathsf{N}}\left(D_{\left(a\right|}\mathsf{N}\right)D_{\left|b\right)}\left(\mathsf{N}K\right)+2\mathsf{N}\left[\ps{\left(N-1\right)}{R_{i\left(ab\right)}{}^{j}}K_{j}{}^{i}+\ps{\left(N-1\right)}{R_{\left(a\right|}{}^{j}}K_{\left|b\right)j}\right]\\
 & +\frac{1}{2\mathsf{N}}\left(\partial^{c}\mathsf{N}\right)\left[D_{a}\left(\mathcal{L}_{0}h_{bc}\right)+D_{b}\left(\mathcal{L}_{0}h_{ac}\right)-D_{c}\left(\mathcal{L}_{0}h_{ab}\right)\right]\\
 & +\mathcal{L}_{0}\left(K_{ab}K-2K_{a}{}^{i}K_{\left(ib\right)}\right)-\frac{1}{\mathsf{N}^{2}}\left(\mathcal{L}_{0}\mathsf{N}\right)\left(\mathcal{L}_{0}K_{ab}\right)\,,
\end{aligned}
\end{equation}
and
\begin{equation}
\begin{aligned}
	\label{Q_tensor}
Q_{ab}= & D_{i}\left[S_{a}{}^{ij}\mathcal{L}_{0}h_{jb}+S_{b}{}^{ij}\mathcal{L}_{0}h_{ja}+2h^{ij}S^{k}{}_{\left(ab\right)}\mathcal{L}_{0}h_{kj}\right]\\
&-S_{a}{}^{ij}\left[D_{j}\mathcal{L}_{0}h_{ib}-D_{i}\mathcal{L}_{0}h_{jb}+D_{b}\mathcal{L}_{0}h_{ij}\right]\\
 & -\frac{1}{\mathsf{N}}\left(\partial^{c}\mathsf{N}\right)\left(\mathcal{L}_{0}S_{cba}+2\mathcal{L}_{0}S_{a\left[cb\right]}+S_{ca}{}^{i}\mathcal{L}_{0}h_{bi}+S_{ab}{}^{i}\mathcal{L}_{0}h_{ci}+S_{cb}{}^{i}\mathcal{L}_{0}h_{ai}\right)\\
 & -\mathsf{N}\mathcal{K}^{ij}{}_{\left(a\right|}D_{i}\left[K_{j\left|b\right)}+K_{\left|b\right)j}\right]+\mathsf{N}\tilde{D}^{i}\left[\mathcal{K}_{i\left(a\right|}{}^{j}K_{j\left|b\right)}+\mathcal{K}_{i\left(a\right|}{}^{j}K_{\left|b\right)j}\right]\\
 & +2S_{ia}{}^{j}\left[S_{j}{}^{ik}\mathcal{L}_{0}h_{kb}+S_{b}{}^{ik}\mathcal{L}_{0}h_{kj}+2h^{ik}S^{l}{}_{\left(jb\right)}\mathcal{L}_{0}h_{lk}+\mathcal{L}_{0}\mathcal{K}_{jb}{}^{i}\right]\\
 & -\left(D_{a}\mathsf{N}\right)\mathcal{L}_{0}\left(\frac{2}{\mathsf{N}}S_{b}\right)+2S_{ab}{}^{i}D_{i}\left(\mathsf{N}K\right)-2\mathcal{L}_{0}\left(K_{ai}S_{0b}{}^{i}\right)\\
 & +2\mathsf{N}S^{ji}{}_{i}D_{j}K_{\left(ab\right)}-\frac{4}{\mathsf{N}}\left(D_{\left(a\right|}\mathsf{N}\right)D_{\left|b\right)}\left(\mathsf{N}S_{0i}{}^{i}\right)\\
 & +4D_{\left(a\right|}\left(\mathsf{N}S_{\left|b\right)ij}K^{ji}\right)+4S_{\left(a\right|}{}^{ji}D_{i}\left(\mathsf{N}K_{\left|b\right)j}\right)\,,
\end{aligned}
\end{equation}
contain only terms with derivatives of at most second-order in $h_{ab}$
and the lapse function $\mathsf{N}$ and with derivatives of at most
first-order of $K_{ab}$, $S_{abc}$, $S_{0ab}$ and $S_{a}$. Note
that $Q_{ab}$ vanishes if the torsion tensor is such that
the components $S_{abc}$, $S_{0ab}$ and $S_{a}$ are identically
zero.

Now, in (\ref{eq_Wave:Wave_eq_general}) the term $D_{a}D_{b}\left[\mathcal{L}_{0}\mathsf{N}-\mathsf{N}^{2}K-2\mathsf{N}^{2}S_{0i}{}^{i}\right]$
contains third-order derivatives of $\mathsf{N}$ and second-order
derivatives of $K$, spoiling the hyperbolicity of the equation. To
remove this term, we use the gauge freedom for the lapse
function and the shift vector. Setting the term in parenthesis to zero is equivalent to impose the algebraic gauge, Eq.~\eqref{gauge}, leading to the definition of a densitized lapse, Eq.~\eqref{eq_Wave:Algebraic_gauge}.
Then, substituting (\ref{eq_Wave:Algebraic_gauge}) in (\ref{eq_Wave:Wave_eq_general})
we find
\begin{equation}
\begin{aligned}\mathcal{L}_{0}R_{ab}-2D_{\left(a\right|}\left(\mathsf{N}R_{\left|b\right)0}\right) & =-\mathsf{N}\square_{h}K_{\left(ab\right)}+\frac{1}{\mathsf{N}}\mathcal{L}_{0}^{2}K_{\left[ab\right]}+2D^{j}D_{\left(a\right|}\left(\mathsf{N}S_{0\left|b\right)j}\right)\\
 & +2D^{j}D_{\left(a\right|}\left(\mathsf{N}S_{\left|b\right)j0}\right)-2D^{i}D_{i}\left(\mathsf{N}S_{0\left(ab\right)}\right)+2D_{a}\mathcal{L}_{0}S_{bi}{}^{i}\\
 & -4D_{a}D_{b}\left(\mathsf{N}S_{0i}{}^{i}\right)+2D^{j}D_{\left(a\right|}\left(\mathsf{N}S_{0j\left|b\right)}\right)-2\mathcal{L}_{0}D_{a}S_{b}\\
 & -\frac{2}{\mathsf{N}}S_{b}\mathcal{L}_{0}D_{a}\mathsf{N}+D_{i}\mathcal{L}_{0}\mathcal{K}_{ab}{}^{i}+P_{ab}+Q_{ab}
\end{aligned}
\label{eq_Wave:Wave_eq_densitized_full}
\end{equation}
and separating it in its symmetric and anti-symmetric parts gives
\begin{equation}
\begin{aligned}\mathcal{L}_{0}R_{\left(ab\right)}-2D_{\left(a\right|}\left(\mathsf{N}R_{\left|b\right)0}\right) & =-\mathsf{N}\square_{h}K_{\left(ab\right)}+2D^{j}D_{\left(a\right|}\left(\mathsf{N}S_{0\left|b\right)j}\right)+2D^{j}D_{\left(a\right|}\left(\mathsf{N}S_{\left|b\right)j0}\right)\\
 & -4D_{\left(a\right.}D_{\left.b\right)}\left(\mathsf{N}S_{0i}{}^{i}\right)+2D^{j}D_{\left(a\right|}\left(\mathsf{N}S_{0j\left|b\right)}\right)+2D_{i}\mathcal{L}_{0}S^{i}{}_{\left(ab\right)}\\
 & -2D^{i}D_{i}\left(\mathsf{N}S_{0\left(ab\right)}\right)+2D_{\left(a\right|}\mathcal{L}_{0}S_{\left|b\right)i}{}^{i}-2\mathcal{L}_{0}D_{\left(a\right|}S_{\left|b\right)}\\
 & -\frac{2}{\mathsf{N}}S_{\left(a\right|}\mathcal{L}_{0}D_{\left|b\right)}\mathsf{N}+P_{\left(ab\right)}+Q_{\left(ab\right)}\,,
\end{aligned}
\label{eq_Wave:Wave_eq_densitized_symmetric}
\end{equation}
\begin{equation}
\begin{aligned}\mathcal{L}_{0}R_{\left[ab\right]} & =-\frac{1}{\mathsf{N}}\mathcal{L}_{0}^{2}S_{ab0}+4S_{ab}{}^{j}D_{j}\left(\mathsf{N}S_{0i}{}^{i}\right)-2\mathcal{L}_{0}D_{\left[a\right.}S_{\left.b\right]}+D_{i}\mathcal{L}_{0}S_{ab}{}^{i}\\
 & +2D_{\left[a\right|}\mathcal{L}_{0}S_{\left|b\right]i}{}^{i}+\frac{2}{\mathsf{N}}S_{\left[a\right|}\mathcal{L}_{0}D_{\left|b\right]}\mathsf{N}+P_{\left[ab\right]}+Q_{\left[ab\right]}\,,
\end{aligned}
\label{eq_Wave:Wave_eq_densitized_antisymmetric}
\end{equation}
where we have used Eq.~(\ref{eq:antisymmetric_2nd_Fundamental_form_torsion_relation})
to write Eq.~(\ref{eq_Wave:Wave_eq_densitized_antisymmetric}) explicitly
as an evolution equation for the torsion component $S_{ab0}$.
%

\end{document}